
\documentclass[a4paper,fleqn]{cas-dc}
\hyphenpenalty=5000
\tolerance=1000

\usepackage{ulem}
\usepackage{xcolor}
\usepackage{graphicx,caption,subcaption}
\usepackage{bm}
\usepackage{subcaption}

\newtheorem{theorem}{Theorem}[section]

\newtheorem{assumption}{Assumption}[section]
\usepackage[utf8]{inputenc}
\usepackage[english]{babel}
\usepackage{comment}
\usepackage{multirow}
\usepackage[none]{hyphenat}
\usepackage{indentfirst}





\graphicspath{ {./figures/} }
\DeclareGraphicsExtensions{.png}
\usepackage{lipsum}
\usepackage[numbers]{natbib}

\usepackage{dutchcal}
\usepackage{listings}
\usepackage{scalerel}
\usepackage[ruled,longend]{algorithm2e}
\usepackage{amsmath}
\usepackage{scalerel}
\newenvironment{proof}{\paragraph{Proof:}}{\hfill$\square$}

\newcommand\reallywidehat[1]{\arraycolsep=0pt\relax%
\begin{array}{c}
\stretchto{
  \scaleto{
    \scalerel*[\widthof{\ensuremath{#1}}]{\kern-.5pt\bigwedge\kern-.5pt}
    {\rule[-\textheight/2]{1ex}{\textheight}} 
  }{\textheight} %
}{0.5ex}\\           
#1\\                 
\rule{-1ex}{0ex}
\end{array}
}
\usepackage[numbers]{natbib}

\def\tsc#1{\csdef{#1}{\textsc{\lowercase{#1}}\xspace}}
\tsc{WGM}
\tsc{QE}
\tsc{EP}
\tsc{PMS}
\tsc{BEC}
\tsc{DE}

\begin{document}
\let\WriteBookmarks\relax
\def\floatpagepagefraction{1}
\def\textpagefraction{.001}
\shorttitle{Composite Structures}
\shortauthors{S. Ma, M. Chen, and R. E. Skelton}

\title [mode = title]{Tensegrity system dynamics based on finite element method}                   



\author[1]{Shuo Ma}[
orcid=0000-0003-3789-2893]
\fnmark[1]


\address[1]{College of Civil Engineering, Zhejiang University of Technology, Hangzhou, 310014, China}

\author[2]{Muhao Chen}[type=editor,
auid=000,bioid=1,
orcid=0000-0003-1812-6835]
\cormark[1]
\fnmark[2]
\ead{muhaochen@tamu.edu}


\address[2]{Department of Aerospace Engineering, Texas A\&M University, College Station, TX, 77840, USA}

\author[2]{Robert E. Skelton}[orcid=0000-0001-6503-9115]
\fnmark[3]



\cortext[cor1]{Corresponding author. Tel.: +1 979-985-8285.}
\fntext[fn1]{Assistant Professor, College of Civil Engineering, Zhejiang University of Technology, Hangzhou, 310014, China}
\fntext[fn2]{Postdoctoral Researcher, Department of Aerospace Engineering, Texas A\&M University, College Station, TX, USA.}
\fntext[fn3]{TEES Eminent Professor, Department of Aerospace Engineering, Texas A\&M University, College Station, TX, USA.}


\begin{abstract}
This study presents a finite element analysis approach to non-linear and linearized tensegrity dynamics based on the Lagrangian method with nodal coordinate vectors as the generalized coordinates. In this paper, nonlinear tensegrity dynamics with and without constraints are first derived. The equilibrium equations in three standard forms (in terms of nodal coordinate, force density, and force vectors) and the compatibility equation are also given. Then, we present the linearized dynamics and modal analysis equations with and without constraints. The developed approach is capable of conducting the following comprehensive dynamics studies for any tensegrity structures accurately: 1. Performing rigid body dynamics with acceptable errors, which is achieved by setting relatively high stiffness for bars in the simulation. 2. Simulating FEM dynamics accurately, where bars and strings can have elastic or plastic deformations. 3. Dealing with various kinds of boundary conditions, for example, fixing or applying static/dynamic loads at any nodes in any direction (i.e., gravitational force, some specified forces, or arbitrary seismic vibrations). 4. Conducting accurate modal analysis, including natural frequency and corresponding modes. Three examples, a double pendulum, a cantilever truss with external force, and a double prism tensegrity tower, are carefully selected and studied. The results are compared with rigid body dynamics and FEM software ANSYS. This study provides a deep insight into structures, materials, performances, as well as an interface towards integrating control theories.

\end{abstract}



\begin{keywords}
Tensegrity system \sep
Finite element method \sep
Lagrangian method \sep
Nodal coordinate \sep 
Non-linear dynamics \sep
Linearized tensegrity dynamics \sep
\end{keywords}

\maketitle

\section{Introduction}
Tensegrity is a coined word: tension and integrity, by Buckminister Fuller \cite{fuller1982synergetics} for the art form created by Ioganson (1921) and Snelson (1948) \cite{lalvani1996origins}. The tensegrity system is a stable network of compressive members (bars/struts) and tensile members (strings/cables). By definition, it is clear that the most fundamental property of the tensegrity system is that all the one-dimensional structural members are axially loaded \cite{ma2020design}. That is to say, the overall structure can be designed along the load path to make the best of each structure member since bars are best in taking compression, and strings are best in taking tension. In fact, a few research on form-finding \cite{koohestani2017analytical,yuan2017form,Zhang2006Adaptive} and topology optimization \cite{lee2016novel,ma2019new,xu2018improved} have shown how to find the optimal load path and where to locate structure members with given design objectives.

Thus, the advantages of the tensegrity structure are straight forward: 1. The mass of the structure to take given loads can be greatly reduced. In fact, many structures have been restudied and redesigned by the tensegrity paradigm to achieve a lightweight objective. For example, Skelton \textit{et al.} proved structure mass to take a compressive load can be greatly reduced by T-Bar and D-bar structures \cite{skelton2009tensegrity}. Chen and Skelton presented a general approach to minimal mass tensegrity considering local and global failure,  solid and hollow bar, gravity, and stiffness calculations \cite{chen2020general}. 2. There is no material bending. Thus, the uncertainty of a structure member is only along its length, which provides a more actuate model \cite{chen2020habitat}. 3. One can change the shape of the structure easily by tuning the length of the strings \cite{fraddosio2017morphology,guest2011stiffness}. 4. Since the strings can be prestressed, the stiffness of the structure can be tuned by the tensions in the strings \cite{liu2019unraveling,yildiz2019effective,zhang2018tensegrity}. 5. The soft structure can be used to absorb energy \cite{ma2018meta,miranda2020mechanics,pajunen2019design,yang2019deployment}. 6. The tensegrity paradigm also promotes the integration of structure and control design, since bars and strings can also be actuators and sensors \cite{chen2020design,kim2020rolling,rieffel2009automated,zhang2018automatically}. 

Tensegrity has shown its great attraction to both artists and engineers, a few research on tensegrity dynamics has been conducted. The existing tensegrity dynamics can be classified into two categories based on the assumptions of whether the bars are rigid or not \cite{kan2018nonlinear}. The first category belongs to rigid body dynamics derived by Newton-Euler's principle or analytical dynamics with assumptions that bars are rigid and strings are linear elastic. For example, Sultan \textit{et al.} derived linearized equations of motion for tensegrity models around arbitrary equilibrium configurations \cite{sultan2002linear}. Skelton presented one of the simplest dynamics forms for class-1  structure by using non-minimal coordinates and assuming the compressive elements to have no inertia about the longitudinal axis \cite{skelton2005dynamics}. Kan \textit{et al.} presented a sliding cable element for multibody dynamics with an application to the deployment of clustered tensegrity \cite{Kan2017A}. Cefalo and Mirats-Tur proposed a dynamic model based on the Lagrangian method for class-1 tensegrity systems with quaternions as the variables
\cite{cefalo2011comprehensive}. Goyal \textit{et al.} presented a compact matrix form of tensegrity dynamics by including massive strings \cite{goyal2019tensegrity}, a corresponding general software for modeling of any tensegrity structures can be found in \cite{goyal2019motes}. Recently, Goyal \textit{et al.} extended the model of the nonlinear dynamics to modulate the torque produced by the network of spatially distributed gyroscopes \cite{goyal2020gyroscopic}. The second one is non-rigid body dynamics formulated by the FEM by assuming that all structure members are elastic/plastic. For example, Murakami studied the static and dynamic equations of tensegrity with large deformation in Eulerian and Lagrangian formulations \cite{murakami2001static}. Faroughi \textit{et al.} presented a non-linear dynamic analysis of space truss structures based on the dynamics of 3D co-rotational (CR) rods \cite{faroughi2015non}. Rimoli developed a physics-based reduced-order model to capture the buckling and post-buckling behavior of bars \cite{rimoli2018reduced}. Kan \textit{et al.} derived the dynamic analysis of clustered tensegrity structures via the framework of the positional formulation FEM \cite{kan2018nonlinear}. However, most of these dynamics equations are achieved by deriving the dynamics of one element and stacking all the structural elements into an assembled matrix or vector form. For the insight knowledge of the nonlinear tensegrity dynamics and future convenience for the field of structural control, a closed-form of dynamics derived from a system-level is needed, which is presented in this paper. 





This paper is organized as follows: Section \ref{section 2} describes bar and string assumptions, nodal coordinates and connectivity matrices notations, and geometric and physical properties of the tensegrity system in compact vector forms. Section \ref{Seciton 3} formulates the shape function of an element, kinetic energy, strain, and gravitational potential energy of the whole structure. Then, tensegrity dynamics with and without boundary constraints are derived by the Lagrangian method. By neglecting the time derivative terms in the dynamics equation, Section \ref{Section 4} gives the equilibrium equations in three standard equivalent forms (in terms of nodal vector, force density, and force vector) and the compatibility equation. Section \ref{section 5} derives the linearized tensegrity dynamics and modal analysis equations with and without boundary constraints. Section \ref{section 6} demonstrates three examples (dynamic response of a double pendulum, dynamics response and modal analysis of a cantilever truss with an external force, and seismic analysis of a tensegrity tower) and compares results with rigid body dynamics and FEM software ANSYS. Section \ref{section 7} summarises the conclusions.

\section{Notations of the tensegrity system}
\label{section 2}
\subsection{Assumptions of structural members}

Under the following assumptions of structural members (bars and strings), the mathematical formulation of any tensegrity systems is established.
\begin{assumption}
The structural members (bars and strings) in the tensegrity system have these properties: \\
1). The structural members are axially loaded, all structural members are connected by frictionless pin-joints. \\
2). The structural members are not rigid, and they are allowed to have elastic or plastic deformation. \\
3). The structural members have negligible inertia about their longitudinal axes.\\
4). Each structural member is homogeneous along its length and of an equal cross-section. Thus, the mass of each structural member is distributed uniformly along its length.\\
5). If $||\bm{s}_{i0}|| > ||\bm{s}_i||$, where the rest length and actual length of the $i^{th}$ string are denoted by $||\bm{s}_{i0}||$ and $||\bm{s}_i||$, and $\lVert\bm{v}\rVert$ is the Euclidean norm of vector $\bm{v}$, since a string can never push along its length, tension in the string should be substituted to zero.\\
\end{assumption}

\subsection{Nodal coordinates}

The position of each node in the structure can be expressed in any frame, we choose to label them with Cartesian coordinates in an inertially fixed frame. Assume the tensegrity structure has $n_n$ number of nodes, the X-, Y-, and Z-coordinates of the \textit{i}th node $\bm{n}_i$ ($i = 1, 2, \cdots, n_n$) can be labeled as $x_i$, $y_i$, and $z_i$. One can also write $\bm{n}_i \in \mathbb{R}^3$ in a vector form:
\begin{align}
\label{ni}
    \bm{n}_i=\begin{bmatrix}  x_i & y_i & z_i\end{bmatrix} ^T.
\end{align}
By stacking $\bm{n}_i$ for $i = 1, 2, \cdots, n_n$ together, we can obtain the nodal coordinate vector $\bm{n} \in \mathbb{R}^{3 n_n}$ for the whole structure:
\begin{align}
     \bm{n}= \begin{bmatrix} \bm{n}_1^T & \bm{n}_2^T & \cdots & \bm{n}_{n_n}^T\end{bmatrix}^T,
\end{align}
or in a matrix form, which is called nodal coordinate matrix $\bm{N}\in \mathbb{R}^{3\times n_n}$:
\begin{align}
   \bm{N} = \begin{bmatrix}
    \bm{n}_1 & \bm{n}_2 & \cdots & \bm{n}_{n_n}
    \end{bmatrix}.
\end{align}









\subsection{Connectivity matrix}
Connectivity matrices denote the topology of the structure or, in other words, how the structural members (bars and strings) are connected at each node. Conventionally, the connectivity matrices contain two types: string connectivity and bar connectivity, labeled as $\bm{C}_s\in \mathbb{R}^{\alpha \times n_n}$ and $\bm{C}_b \in \mathbb{R}^{\beta \times n_n}$, where $\alpha$ and $\beta$ are the number of strings and bars in the structure \cite{goyal2019motes}. 

Since both bars and strings are allowed to have elastic or plastic deformation, we do not need to distinguish the connectivity by the types of structural members in this FEM formulation. Thus, we use a matrix  $\bm{C} \in \mathbb{R}^{n_e \times n_n}$ to represent the topology of the whole structure, where $n_e$ is the number of all the structural elements, which satisfies $n_e = \alpha + \beta$. The \textit{i}th row of $\bm{C}$, denoted as $\bm{C}_i = [\bm{C}]_{(i,:)} \in \mathbb{R}^{1\times n_n}$, represents the \textit{i}th structural element, starting form node \textit{j} ($\textit{j} = 1, 2, \cdots, n_n$) to node \textit{k} ($\textit{k} = 1, 2, \cdots, n_n$), shown in Fig.\ref{geometry}. The \textit{m}th ($\textit{m}= 1, 2, \cdots, n_n$) entry of $\bm{C}_{i}$ satisfies:
\begin{align}
[\bm{C}]_{im}=\left\{
\begin{aligned}
-1 &,~ m=j\\
1 &,~ m=k\\
0 &,~ m=else 
\end{aligned}
\right..
\end{align}
For $n_e$ number of structural elements, the overall structure connectivity matrix $\bm{C}\in  \mathbb{R}^{n_e \times n_n}$ can be written as:
\begin{align}
\bm{C} = \begin{bmatrix} \bm{C}_1^T & \bm{C}_2^T & \cdots & \bm{C}_{n_e}^T\end{bmatrix} ^T .
\end{align}

Define the nodal coordinate vector of the \textit{i}th element $\bm{n}_i^e  \in \mathbb{R}^6$ as:
\begin{align}
\bm{n}_i^e=\begin{bmatrix}
\bm{n}_j \\
\bm{n}_k \\
\end{bmatrix}=\begin{bmatrix} x_j & y_j& z_j&x_k&y_k&z_k\end{bmatrix}^T.
\end{align}
One can also abstract $\bm{n}_i^e$ from the structure nodal coordinate vector $\bm{n}$:
\begin{align}
    \bm{n}_i^e=\Bar{\bm{C}}_i\otimes \textbf{I}_3\bm{n},
    \label{n_i^e}
\end{align}
where $\textbf{I}_{3} \in \mathbb{R}^{3\times 3}$ is a identity matrix, $\Bar{\bm{C}}_i$ is a self-defined transformation matrix, whose \textit{p}th column satisfies:
\begin{align}
[\bar{C}_{i}]_{(:,p)}=\left\{\begin{array}{ll}
{\begin{bmatrix}1&0\end{bmatrix}^T}, & p=j \\
{\begin{bmatrix}0&1\end{bmatrix}^T}, & p=k \\
{\begin{bmatrix}0&0\end{bmatrix}^T}, & p=else
\end{array}\right..
\end{align}



\subsection{Geometric properties of the structural elements}

\begin{figure}
    \centering
    \includegraphics[scale=0.8]{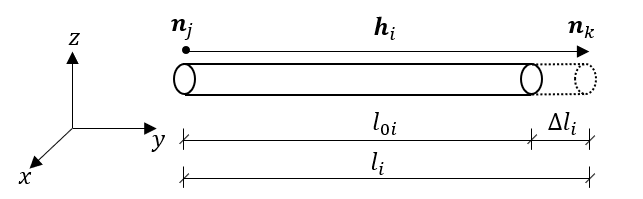}
    \caption{Structure member vector $\bm{h}_i$, determined by node $\bm{n}_j$ and node $\bm{n}_k$ in the Cartesian coordinates, has a length of $l_i  = ||\bm{h}_i||= l_{0i} + \Delta l_i$, where $l_{0i}$ is the rest length and $\Delta l_i$ is the displacement.}
    \label{geometry}
\end{figure}



Let us look at the $i$th structure element $\bm{h}_i$, its geometry properties is shown in Fig.\ref{geometry}, the element vector is given by:
\begin{align}
    \bm{h}_i=\bm{n}_k-\bm{n}_j=\bm{C}_i\otimes\textbf{I}_3\bm{n}.
\end{align}
Stack all the structure elements in a matrix form, one can obtain:
\begin{align}
    \bm{H}=\begin{bmatrix}\bm{h}_1 & \bm{h}_2&\cdots&\bm{h}_{n_e}\end{bmatrix}=\bm{N}\bm{C}^T.
\end{align}
The length of the \textit{i}th structure element $l_i$ satisfies:
\begin{align}
    l_i=\lVert \bm{h}_i\rVert=(\bm{n}^T(\bm{C}_i^T\bm{C}_i)\otimes\textbf{I}_3\bm{n})^\frac{1}{2}.
    \label{li}
\end{align}
Then, the overall structure element length vector $\bm{l} \in \mathbb{R}^{n_e}$ is:
\begin{align}
    \bm{l}=\begin{bmatrix}l_1&l_2&\cdots&l_{n_e}\end{bmatrix}^T.
\end{align}
The rest length vector $\bm{l}_0 \in \mathbb{R}^{n_e}$  of the whole structure is:
\begin{align}
      \bm{l}_0=\begin{bmatrix} l_{01}&l_{02}&\cdots&l_{0{n_e}}\end{bmatrix}^T, 
\end{align}
where rest length is defined as the length of an structure element with no tension or compression. 

\subsection{Physical properties of the structural elements}

\begin{figure}
    \centering
    \includegraphics[width=2.5in]{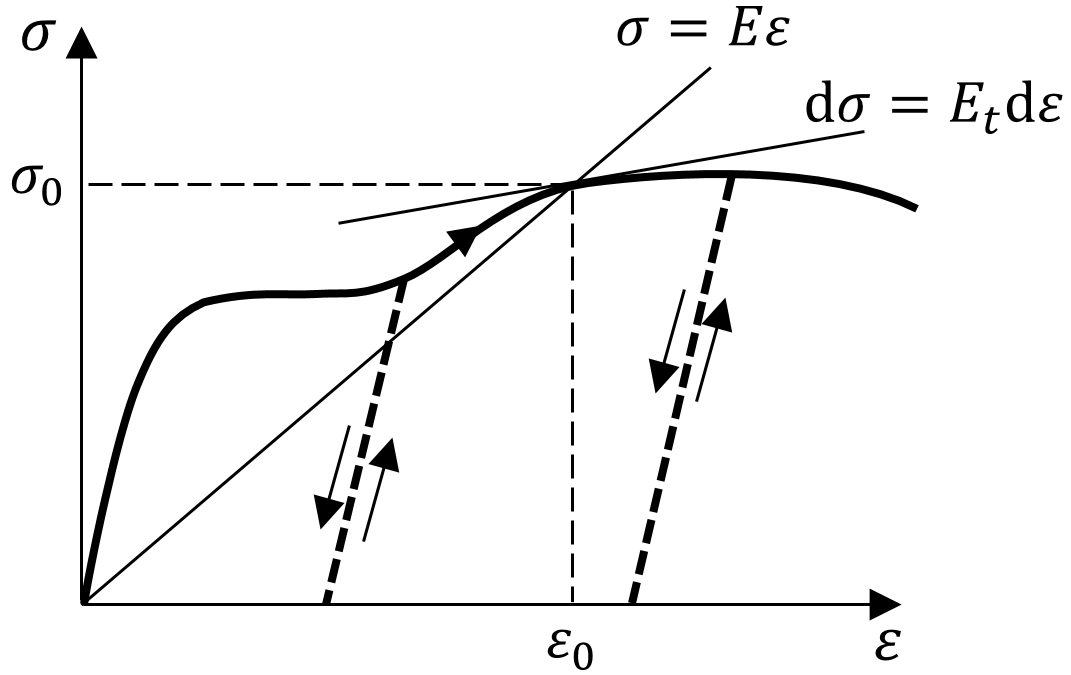}
    \caption{A typical stress-strain curve of structure elements, where $E$ and $E_t$ are called secant modulus and tangent modulus of the material. The curve includes elastic or plastic deformation phases. The dotted lines indicate stress-strain levels for unloading cases at certain points of the stress-strain curve.}
    \label{stress strain curve}
\end{figure}

A typical stress-strain curve of an element is given in Fig.\ref{stress strain curve}. The stress in the element satisfies:
\begin{equation}
    \sigma=E\epsilon,
    \label{sigma}
\end{equation}
where $E$ is the secant modulus and $\epsilon$ is the strain, and this equation can represent the stress of any material including linear elastic, multi-linear elastic, plastic, etc. The derivative of Eq. (\ref{sigma}) is: 
\begin{equation}
       \mathrm{d} \sigma=E_t\mathrm{d} \epsilon,
    \label{d sigma}
\end{equation}
where $E_t$ is tangent modulus. For elastic material, the secant modulus is identical to its tangent modulus. We discuss the elastic and plastic properties of materials here because later we will show that the developed dynamics are capable of doing analysis of both kinds of materials.

Suppose material density is $\rho$, the cross section area, secant modulus, tangent modulus of the \textit{i}th element are respectively $A_i$, $E_i$  $E_{ti}$, the element mass $m_i$ satisfies $m_i =\rho A_i l_{0i}$. Denote the cross section area vector, mass vector, secant modulus and tangent modulus vector of the structure as $\bm{A}$, $\bm{m}$, $\bm{E}$ and $\bm{E}_t$ $\in \mathbb{R}^{n_e}$, one can write:
\begin{align}
    \bm{A} & =\begin{bmatrix}A_{1}& A_{2}&\cdots&A_{n_e}\end{bmatrix}^T,\\
    \bm{m} & =\begin{bmatrix}m_{1}&m_{2}&\cdots& m_{n_e}\end{bmatrix}^T=\rho\hat{\bm{A}}\bm{l}_0 ,\\
   \bm{E} & =\begin{bmatrix}E_{1}&E_{2}&\cdots&E_{n_e} \end{bmatrix}^T,\\
   \bm{E}_t &=\begin{bmatrix}E_{t1}&E_{t2}&\cdots&E_{tn_e}
   \end{bmatrix}^T,
\end{align}
where $\hat{\bm{v}}$ transforms vector $\bm{v}$ into a diagonal matrix, whose diagonal entries are the elements of vector $\bm{v}$ and elsewhere are zeros. 

The internal force of the \textit{i}th element is $t_i =A_i\sigma_i= E_iA_i(l_i-l_{0i})/l_{0i}$, the internal force vector of the structure $\bm{t} \in \mathbb{R}^{n_e}$ can be written as:
\begin{align}
    \bm{t}=
    \begin{bmatrix} t_1 & t_2 & \cdots&t_{n_e} \end{bmatrix}^T= 
    \hat{\bm{E}}\hat{\bm{A}}\hat{\bm{l}}_0^{-1}(\bm{l}-\bm{l}_0).
    \label{t force vector}
\end{align}
Force density of the \textit{i}th element is given by $x_i=t_i/l_i$, the force density vector of all the structure elements is:
\begin{align}
\bm{x}=\hat{\bm{l}}^{-1} \bm{t}=\hat{\bm{E}} \hat{\bm{A}}(\bm{l}_{0}^{-1}-\bm{l}^{-1}),
\label{force density}
\end{align}
where $\bm{\bm{v}}^{-1}$ represents a vector whose entry is the reciprocal of its corresponding entry in $\bm{v}$. 
The force density vector $\bm{x}$ is normally defined in the from of $\bm{x} = \begin{bmatrix} \bm{\lambda}^T &\bm{ {\gamma}}^T\end{bmatrix}^T$ with the information of ${\bm{\lambda}}$ and ${\bm{\gamma}}$ are force densities in the bars and strings \cite{chen2020general}. We should point out that Eqs. (\ref{t force vector}) and (\ref{force density}) can be used to compute force vector and force density vector for either elastic or plastic materials by using different secant modulus $\textbf{E}$ of the materials.


\section{Nonlinear tensegrity dynamics formulation}
\label{Seciton 3}
\subsection{Energy equation formulation}

\subsubsection{Shape function of the structure element}


\begin{figure}
    \centering
    \includegraphics[scale=0.35]{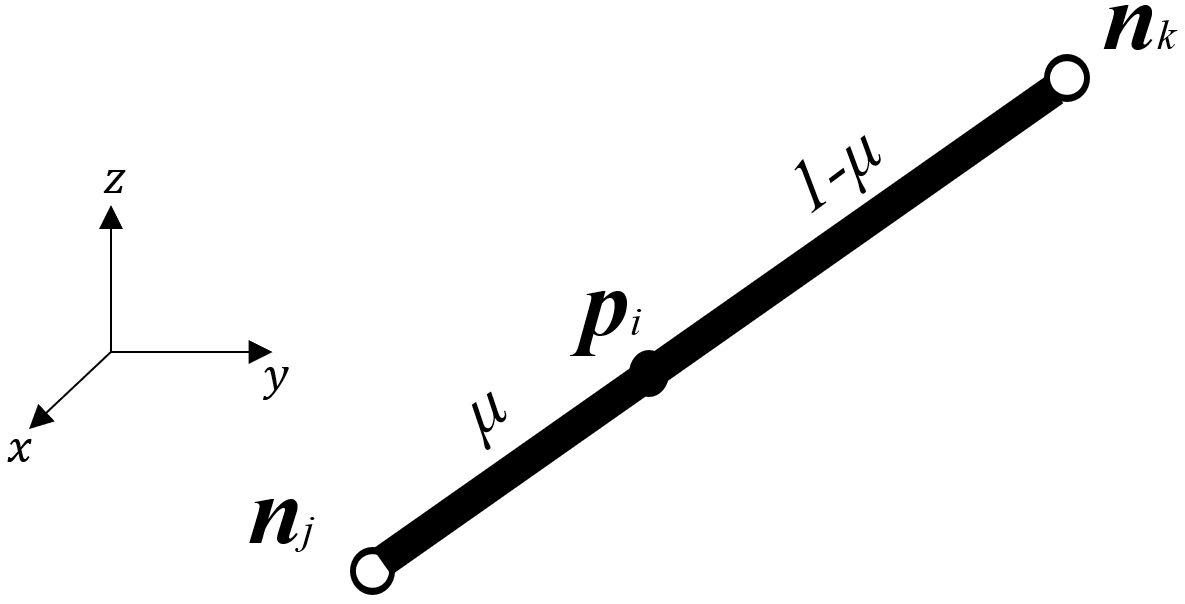}
    \caption{Shape function of an element, scalar $\mu$ helps to locate the position of point $\bm{p}_i$ on the $i$th structure element in between node $\bm{n}_j$ and node $\bm{n}_k$ in the Cartesian coordinates.}
    \label{shape function}
\end{figure}







Since the structure member is axially loaded, the displacement of the material particles are along the bar/string vectors. We assume the displacements of material particles on the structure member are in a uniform manner \cite{bathe2007finite}. Here, we introduce a scalar $\mu$ to help expressing the coordinates of point $\bm{p}_i$ on the $i$th member between node $\bm{n}_j$ and node $\bm{n}_k$ in the \textit{i}th structure element, shown in Fig.\ref{shape function}. Thus, the location of a point $\bm{p}_i$ on the structure member can be computed as a linear function in terms of $\mu$:
\begin{align}
    \bm{p_i}=\begin{bmatrix} 1&\mu \end{bmatrix} \otimes \textbf{I}_3 \begin{bmatrix}\bm{a}_0 \\\bm{a}_1\end{bmatrix}
    \label{shape function r},
\end{align}
where $\bm{a}_0$ and $\bm{a}_1$  $\in \mathbb{R}^3$ are unknowns. Substitute the nodal coordinate of $\bm{n}_j$, $\bm{n}_k$ with $\mu$ = 0 and 1 into Eq. (\ref{shape function r}), we have:
\begin{align}\label{a_0_1}
\begin{bmatrix}1& 0\\ 1& 1\end{bmatrix}\otimes \textbf{I}_3 \begin{bmatrix}\bm{a}_0 \\\bm{a}_1\end{bmatrix} =\begin{bmatrix}\bm{n}_j \\\bm{n}_k\end{bmatrix}.
\end{align}
Then, the solution of Eq. (\ref{a_0_1}) is:
\begin{align}
\begin{bmatrix}\bm{a}_0 \\\bm{a}_1\end{bmatrix}=\begin{bmatrix}1& 0\\ -1& 1\end{bmatrix}\otimes\textbf{I}_3  \begin{bmatrix}\bm{n}_j \\\bm{n}_k\end{bmatrix}
\label{shape function a0a1}.
\end{align}
Substitute Eq. (\ref{shape function a0a1}) into Eq. (\ref{shape function r}), we have: 
\begin{align}    
    \bm{p}_i&=\begin{bmatrix} 1-\mu & \mu\end{bmatrix}\otimes\textbf{I}_3\begin{bmatrix}\bm{n}_j \\\bm{n}_k\end{bmatrix}=\bm{N}^e \bm{n}_i^e, \label{r=N^en_i^e} \\
        \bm{N}^e&=\begin{bmatrix}1-\mu&\mu\end{bmatrix}\otimes \textbf{I}_3,
\end{align}
where $\bm{N}^e\in \mathbb{R}^{3\times 6}$ is usually called the shape function of a structure element.
\subsubsection{Kinetic energy}

The kinetic energy $T$ for the whole structure equals the sum of kinetic energy of material particles in all the structure elements, which can be written as a function of particle velocity $\dot{\bm{p}}_i$:
\begin{align}
    \label{T1}
    T=\sum\limits_{i=1}^{n_e}\frac 12\int\limits_{0}^1m_i\lVert\dot{\bm{p}}_i\rVert^2d\mu. 
\end{align}
Substitute Eq. (\ref{n_i^e}) and Eq. (\ref{r=N^en_i^e}) into Eq. (\ref{T1}), we have:
\begin{align}
    T &= \sum\limits_{i=1}^{n_e}\frac 12\int\limits_{0}^1m_i(\bm{N}^e\bar{\bm{C}}_i\otimes \textbf{I}_3\dot{\bm{n}})^2d\mu\\
    &=\sum\limits_{i=1}^{n_e}\frac {m_i}{12}\dot{\bm{n}}^T(\bar{\bm{C}}_i^T\begin{bmatrix}2&1\\1&2\end{bmatrix}  \bar{\bm{C}}_i)\otimes \textbf{I}_3\dot{\bm{n}} \\
    &=\sum\limits_{i=1}^{n_e}\frac {m_i}{12}\dot{\bm{n}}^T(\bar{\bm{C}}_i^T(\begin{bmatrix}1\\1\end{bmatrix}[1~1]+ \lfloor\begin{bmatrix}1\\1\end{bmatrix}[1~1]\rfloor )\bar{\bm{C}}_i)\otimes \textbf{I}_3\dot{\bm{n}} \\
    &=\sum\limits_{i=1}^{n_e}\frac {1}{12}\dot{\bm{n}}^T(|\bm{C}|_i^Tm_i|\bm{C}|_i+ \lfloor |\bm{C}|_i^Tm_i |\bm{C}|_i\rfloor)\otimes \textbf{I}_3 \dot{\bm{n}} \\   
   &=\frac {1} {12}\dot{\bm{n}}^T(|\bm{C}|^T\hat{m}|\bm{C}|+ \lfloor|\bm{C}|^T\hat{m}|\bm{C}|\rfloor)\otimes \textbf{I}_3\dot{\bm{n}}\\
   &=\frac {1} {2}\dot{\bm{n}}^T \bm{M}\dot{\bm{n}},
\end{align}
where $|\bm{V}|$ is an operator getting the absolute value of each element for a given matrix, and the operator $\lfloor \bm{V} \rfloor$ sets every off-diagonal element of the square
matrix to zero. $\bm{M}\in \mathbb{R}^{3n_n\times 3n_n}$ is called the mass matrix of the structure:
\begin{align}
   \bm{M} =\frac {1} {6}(|\bm{C}|^T\hat{\bm{m}}|\bm{C}|+ \lfloor|\bm{C}|^T\hat{\bm{m}}|\bm{C}|\rfloor)\otimes \textbf{I}_3.
    \label{M}
\end{align}

Since the matrix $\bm{M}$ is symmetric, we can have the following equation:
\begin{align}
    \frac{\mathrm{d}}{\mathrm{d}t}\frac{\partial T}{\partial \dot{\bm{n}}}=\bm{M}\ddot{\bm{n}}.
    \label{Mddn}
\end{align}
Note that we use denominator layout notation in matrix calculus, which means the derivative of a scalar by a column vector is still a column vector.
\subsubsection{Strain potential energy}
We consider elastic and plastic deformation of structure members, to unify the two cases, the strain potential energy $V_e$ of the whole structure caused by elements' internal force can be written into an integral form:
\begin{align}
  \label{Ve}
V_e&=\sum\limits_i^{n_e}V_{ei}\\
&=\sum\limits_i^{n_e}\int_{l_{0i}}^{l_i}t_i\mathrm{d}u\\
&=\sum\limits_i^{n_e}\int_{l_{0i}}^{l_i}\frac{E_iA_i(u-l_{0i})}{l_{0i}}\mathrm{d}u,
\end{align}
where $\mathrm{d}u$ is the differential of the structure member length. The derivative of strain potential energy $V_e$ with respect to nodal coordinate vector $\bm{n}$ is:
\begin{align}
    \frac{\partial V_e}{\partial\bm{n}}
    &=\sum\limits_i^{n_e}\frac{\partial V_{ei}}{\partial l_i}\frac{\partial l_i}{\partial \bm{n}}\\
    &=\sum\limits_i^{n_e}\frac{E_iA_i(l_i-l_{0i})}{l_{0i}}\frac{\partial l_i}{\partial \bm{n}}\\
&=\sum\limits_i^{n_e}t_i\frac{\partial l_i}{\partial \bm{n}}
    \label{dV_edn1}.
\end{align}
The derivative of element's length $l_i$ with respect to nodal coordinate vector $\bm{n}$ can be obtained from Eq. (\ref{li}):
\begin{align}
    \frac{\partial l_i}{\partial \bm{n}}=\frac{(\bm{C}_i^T\bm{C}_i)\otimes\textbf{I}_3\bm{n}}{l_i}
    \label{dli_dni}.
\end{align}
Substitute Eq. (\ref{dli_dni}) into Eq. (\ref{dV_edn1}), and use the definition of force density $x_i = f_i/l_i$ in the \textit{i}th structure element, we have:
\begin{align}
    \frac{\partial V_e}{\partial\bm{n}}&=\sum\limits_i^{n_e}x_i(\bm{C}_i^T\bm{C}_i)\otimes\textbf{I}_3\bm{n}\\
    &=(\bm{C}^T\hat{\bm{x}}\bm{C})\otimes\textbf{I}_3\bm{n}\\
    &=\bm{Kn},
    \label{Kn}
\end{align}
where $\bm{K} \in \mathbb{R}^{3n_n \times 3n_n}$ is the stiffness matrix of the tensegrity structure:
\begin{align}
    \bm{K}=(\bm{C}^T\hat{\bm{x}}\bm{C})\otimes\textbf{I}_3.
    \label{K}
\end{align}

\subsubsection{Gravitational potential energy}
For many cases that the tensegrity structures are in the presence of gravity field. Suppose the gravity force is exerted towards the negative direction of the Z-axis, the acceleration of gravity is $g$, for example, on earth $g=9.8 m/s^2$. The gravitational potential energy $V_g$ can be written as:
\begin{align} 
    V_g &=\sum\limits_i^{n_e}\frac{m_ig}{2}(z^i_j+z^i_k)\\ 
    &=\sum\limits_i^{n_e}\frac{m_ig}{2}|\bm{C}_i|\otimes
    \begin{bmatrix}
    0 & 0 & 1
    \end{bmatrix}
\bm{n}\\
    &=\frac{g}{2}\bm{m}^T|\bm{C}|\otimes    \begin{bmatrix}
    0 & 0 & 1
    \end{bmatrix}\bm{n},
\end{align}
where $z^i_j$ and $z^i_k$ are Z-coordinates of node $\bm{n}_j$ and node $\bm{n}_k$ of the \textit{i}th structure element. $(z^i_j+z^i_k)/2$ is the Z-coordinate of mass center of the \textit{i}th element. Then, the partial derivative of $V_g$ with respect to $\bm{n}$ is:
\begin{align}
    \frac{\partial V_g}{\partial \bm{n}}=\frac{g}{2}(|\bm{C}|^T\bm{m})\otimes    \begin{bmatrix}
    0 & 0 & 1
    \end{bmatrix}^T=\bm{g},
    \label{g}
\end{align}
where $\bm{g} \in \mathbb{R}^{3 n_n}$ is the gravitational force vector in all nodes. For structure analysis without gravity, one can just set $\bm{g} = \bm{0}$. 

\subsection{Tensegrity dynamics formulation based on Lagrangian method}



\begin{theorem}
The finite element formulation for nonlinear tensegrity dynamics in the presence of gravity is given by:
\begin{equation}
\bm{M} \ddot{\bm{n}}+\bm{D} \dot{\bm{n}}+\bm{K} \bm{n}=\bm{f}_{e x}-\bm{g},
\label{dynamics2}
\end{equation}
where $\bm{M}$, $\bm{D}$, and $\bm{K}$ are mass, damping, and stiffness matrices given in Eqs. (\ref{M}), (\ref{f_d}), and (\ref{K}), $f_{ex}$ is external forces on the structure nodes, and $\bm{g}$ is gravity vector show in Eq. (\ref{g}).
\end{theorem}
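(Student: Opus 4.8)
The plan is to obtain Eq. (\ref{dynamics2}) directly from Lagrange's equations of motion for the whole structure, taking the nodal coordinate vector $\bm{n}$ as the vector of generalized coordinates and using the closed-form energies assembled in Section \ref{Seciton 3}. The Lagrangian is the scalar $L=T-V_{e}-V_{g}$, with $T$, $V_{e}$, $V_{g}$ given by Eqs. (\ref{T1}), (\ref{Ve}), and the gravity expression preceding Eq. (\ref{g}). Since the viscous damping forces and the applied nodal loads are non-conservative, I would write Lagrange's equation in the form $\frac{\mathrm{d}}{\mathrm{d}t}\frac{\partial L}{\partial \dot{\bm{n}}}-\frac{\partial L}{\partial \bm{n}}=\bm{Q}$, where the generalized force $\bm{Q}$ collects the external nodal force vector $\bm{f}_{ex}$ (identified through the virtual work $\bm{f}_{ex}\T\,\delta\bm{n}$, consistent with the shape-function discretization) and the dissipative force derived from a quadratic Rayleigh dissipation function in $\dot{\bm{n}}$, the latter producing the term $-\bm{D}\dot{\bm{n}}$ with $\bm{D}$ as in Eq. (\ref{f_d}).

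The execution then reduces to substituting the derivatives already computed. Because $T$ is a quadratic form in $\dot{\bm{n}}$ with constant symmetric coefficient matrix $\bm{M}$ and carries no explicit dependence on $\bm{n}$, I get $\frac{\mathrm{d}}{\mathrm{d}t}\frac{\partial T}{\partial \dot{\bm{n}}}=\bm{M}\ddot{\bm{n}}$ from Eq. (\ref{Mddn}) and $\frac{\partial T}{\partial \bm{n}}=\bm{0}$. The potential energies enter only through $\bm{n}$, giving $\frac{\partial V_{e}}{\partial \bm{n}}=\bm{K}\bm{n}$ by Eq. (\ref{Kn}) and $\frac{\partial V_{g}}{\partial \bm{n}}=\bm{g}$ by Eq. (\ref{g}). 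Assembling these pieces yields $\bm{M}\ddot{\bm{n}}+\bm{K}\bm{n}+\bm{g}=\bm{f}_{ex}-\bm{D}\dot{\bm{n}}$, which is exactly Eq. (\ref{dynamics2}) after moving the damping term to the left-hand side.

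Two remarks would make the argument honest. First, $\bm{K}=(\bm{C}\T\hat{\bm{x}}\bm{C})\otimes\textbf{I}_{3}$ is configuration-dependent, since the force densities $\bm{x}$ depend on the element lengths $\bm{l}(\bm{n})$ through Eq. (\ref{force density}); hence $\bm{K}\bm{n}$ is genuinely nonlinear in $\bm{n}$, but the identity $\partial V_{e}/\partial\bm{n}=\bm{K}\bm{n}$ is an exact gradient, not a linearization, and it remains valid for plastic members once $\bm{E}$ is read as the secant modulus in the sense of Eq. (\ref{sigma}). Second, the slack-string assumption (zero tension when $\lVert\bm{s}_{i0}\rVert>\lVert\bm{s}_{i}\rVert$) is absorbed into the corresponding entries of $\bm{x}$, so $V_{e}$ and $\bm{K}$ are defined piecewise, but this does not affect the Lagrange computation on each branch.

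The main obstacle is not the conservative part, which is essentially bookkeeping with the Kronecker-product identities already established, but the rigorous handling of the non-conservative generalized forces: showing that arbitrary nodal loads (including the consistent discretization of distributed actions such as gravity) contribute precisely $\bm{f}_{ex}$ to $\bm{Q}$, and that a quadratic Rayleigh dissipation function reproduces the linear term $\bm{D}\dot{\bm{n}}$ with the symmetric $\bm{D}$ of Eq. (\ref{f_d}), with the correct scalar factors. Pinning down these constants and the symmetry of $\bm{D}$ is the delicate step; everything else follows by direct substitution.
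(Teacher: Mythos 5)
Your proposal is correct and follows essentially the same route as the paper: Lagrange's equations with $\bm{n}$ as generalized coordinates, substituting the precomputed terms $\bm{M}\ddot{\bm{n}}$, $\bm{K}\bm{n}$, and $\bm{g}$, and collecting the non-conservative nodal forces on the right-hand side. The only cosmetic difference is that you derive the $-\bm{D}\dot{\bm{n}}$ term from a Rayleigh dissipation function, whereas the paper simply postulates the linear damping force $\bm{f}_d=-\bm{D}\dot{\bm{n}}$ as part of $\bm{f}_{np}=\bm{f}_d+\bm{f}_{ex}$.
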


\begin{proof}

The dynamics of tensegrity structures can be derived by Lagrangian method. The Lagrangian equation is:
\begin{align}
    \frac{\mathrm{d}}{\mathrm{d}t}\frac{\partial L}{\partial \dot{\bm{q}}}-\frac{\partial L}{\partial \bm{q}}=\bm{f}_{np},
    \label{Lagarangian's equation}
\end{align}
where $L=T-V$ is the Lagrangian function, and $T$ and $V$ are the kinetic energy and potential energy of the system, $\bm{f}_{np}$ is the non-potential force vector on the nodes of the tensegrity structures, and $\bm{q}$ is the generalized coordinate of the system, which is nodal coordinate vector $\bm{n}$ in our derivation. The potential energy of the whole structure is the sum of strain energy $V_e$ and gravitational potential energy $V_g$:
\begin{align}
    V=V_e+V_g.
\end{align}
Then, the Lagrangian's function can be written as:
\begin{align}
    L=T-(V_e+V_g).
    \label{L}
\end{align}
Substitute Eqs. (\ref{Mddn}), (\ref{Kn}), (\ref{g}), (\ref{L}) into Eq. (\ref{Lagarangian's equation}), we have the dynamic equation of the tensegrity structure:
\begin{align}
\bm{M}\ddot{\bm{n}}+\bm{Kn}=\bm{f}_{np}-\bm{g},
\label{dynamics1}
\end{align}
where $\bm{M}\in \mathbb{R}^{3n_n\times 3n_n}$, $\bm{K} \in \mathbb{R}^{3n_n\times 3n_n}$, $\bm{g}\in \mathbb{R}^{3n_n}$ are mass matrix, stiffness matrix and gravitational force vector, given in Eq. (\ref{M}), Eq. (\ref{K}), and Eq. (\ref{g}). The non-potential force $\bm{f}_{np}$ is the sum of damping force $\bm{f}_d$ and external force $\bm{f}_{ex}$:
\begin{align}
    \bm{f}_{np}=\bm{f}_{d}+\bm{f}_{ex}.
    \label{f_np}
\end{align}
The damping force is assumed to be linear in terms of $\dot{n}$ as:
\begin{align}
    \bm{f}_{d}=-\bm{D}\dot{\bm{n}},
    \label{f_d}
\end{align}
where $\bm{D}\in \mathbb{R}^{3n_n\times 3n_n}$ is the damping matrix of the structure. Substitute Eq. (\ref{f_np}) and Eq. (\ref{f_d}) into Eq. (\ref{dynamics1}), we can obtain a standard form of the dynamics equation as in Eq. (\ref{dynamics2}).

\end{proof}


\subsection{Tensegrity dynamics with constraints}
By constraints, we mean, for many cases, the position, velocity, or acceleration of some nodes in the structure are fixed/given. Adding these constraints to the dynamics will restrict the motion in certain dimensions, thus the dynamics Eq. (\ref{dynamics2}) can be reduced into a smaller space. Thus, to compute the reduced-order dynamics, free nodes and fix nodes must be separated. 




Let us define vector $\bm{a}=\begin{bmatrix}a_{1} & a_{2}& \cdots & a_{n_{a}}\end{bmatrix}^{T} \in \mathbb{R}^{n_{a}}$ and vector $\bm{b}=\begin{bmatrix}b_{1}& b_{2}& \cdots & b_{n_{b}}\end{bmatrix}^{T} \in \mathbb{R}^{n_{b}}$, in which the element values of $\bm{a}$ and $\bm{b}$ are the indices of free and constrained entries in the nodal coordinate vector $\textbf{n}$. $n_a$ and $n_b$ is the number of free and constrained nodal coordinates, and they satisfy $n_a+n_b=3n_n$.  We use $\bm{n}_a$ and $\bm{n}_b$ to represent the free and constrained nodal coordinate vector. $\bm{E}_{a} \in \mathbb{R}^{3 n_{n} \times n_{a}}$ and $\bm{E}_{b} \in \mathbb{R}^{3 n_{n} \times n_{b}}$ are the matrix to abstract $\bm{n}_a$ and $\bm{n}_b$ from $\bm{n}$:
\begin{align}
\bm{E}_{a}(:, i)=\textbf{I}_{3 n}\left(:, a_{i}\right), ~\bm{E}_{b}(:, i)=\textbf{I}_{3 n}\left(:, b_{i}\right).
\end{align}
The relation between $\bm{n}_a$, $\bm{n}_b$, and $\bm{n}$ is:
\begin{align}
\bm{n}_{a}=\bm{E}_{a}^{T} \bm{n},~ \bm{n}_{b}=\bm{E}_{b}^{T} \bm{n}.
\end{align}
Note that $\begin{bmatrix}\bm{E}_{a} &  \bm{E}_{b}\end{bmatrix}$ is an orthonormal matrix, so given $\bm{n}_a$ and $\bm{n}_b$, the nodal coordinate vector $\bm{n}$ can be obtained by:
\begin{align}\bm{n}=
\begin{bmatrix}
\bm{E}_{a}^{T} \\
\bm{E}_{b}^{T}
\end{bmatrix}^{-1}
\begin{bmatrix}\bm{n}_{a} \\
\bm{n}_{b}
\end{bmatrix}=
\begin{bmatrix}
\bm{E}_{a} & \bm{E}_{b}\end{bmatrix}
\begin{bmatrix}
\bm{n}_{a} \\
\bm{n}_{b}
\end{bmatrix}.
\label{n=[Ea Eb][na;nb]}
\end{align}
\begin{theorem}
The finite element formulation for nonlinear tensegrity dynamics in the presence of constraints and gravity is given by:
\begin{align}\nonumber
\bm{M}_{a a} \ddot{\bm{n}}_{a}+\bm{D}_{a a} \dot{\bm{n}}_{a}+\bm{K}_{a a} \bm{n}_{a}= & \bm{E}_{a}^{T} \bm{f}_{e x} -\bm{M}_{a b} \ddot{\bm{n}}_{b}-\bm{D}_{a b} \dot{\bm{n}}_{b} \\ & -\bm{K}_{a b} \bm{n}_{b}-\bm{E}_{a}^{T} \bm{g},
\label{dynamics equation reduced order}
\end{align}
where $\bm{M}_{a a}$ and $\bm{M}_{a b}$ are mass matrices. $\bm{D}_{a a}$ and $\bm{D}_{a b}$ are damping matrices. $\bm{K}_{a a}$ and $\bm{K}_{a b}$ are stiffness matrices. $\bm{f}_{ex}$ is external forces on the structure nodes, and $\bm{g}$ is gravity vector, which satisfy:
\begin{align}\label{aa_ab1}
    \bm{M}_{aa} & =\bm{E}_{a}^{T} \bm{M} \bm{E}_{a},~
\bm{M}_{ab}=\bm{E}_{a}^{T} \bm{M} \bm{E}_{b}, \\ \label{aa_ab2}
\bm{D}_{aa} & =\bm{E}_{a}^{T} \bm{D} \bm{E}_{a},~
\bm{D}_{ab}=\bm{E}_{a}^{T} \bm{D} \bm{E}_{b}, \\ \label{aa_ab3}
\bm{K}_{aa} & =\bm{E}_{a}^{T} \bm{K} \bm{E}_{a},~
\bm{K}_{ab}=\bm{E}_{a}^{T} \bm{K} \bm{E}_{b},
\end{align}
and $\bm{M}$, $\bm{D}$, $\bm{K}$, and $\bm{g}$ are given in Eqs. (\ref{M}), (\ref{f_d}), (\ref{K}), and (\ref{g}).
\end{theorem}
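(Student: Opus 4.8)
The plan is to obtain Eq.~(\ref{dynamics equation reduced order}) directly from the already-established unconstrained dynamics, Eq.~(\ref{dynamics2}), by the coordinate change that separates the free and constrained entries of $\bm{n}$. First I would substitute the decomposition $\bm{n} = \bm{E}_a \bm{n}_a + \bm{E}_b \bm{n}_b$, which follows from Eq.~(\ref{n=[Ea Eb][na;nb]}) and the orthonormality of $\begin{bmatrix}\bm{E}_a & \bm{E}_b\end{bmatrix}$, together with its time derivatives $\dot{\bm{n}} = \bm{E}_a \dot{\bm{n}}_a + \bm{E}_b \dot{\bm{n}}_b$ and $\ddot{\bm{n}} = \bm{E}_a \ddot{\bm{n}}_a + \bm{E}_b \ddot{\bm{n}}_b$, into Eq.~(\ref{dynamics2}). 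This replaces each of $\bm{M}\ddot{\bm{n}}$, $\bm{D}\dot{\bm{n}}$, $\bm{K}\bm{n}$ by a pair of terms. Then I would left-multiply the whole equation by $\bm{E}_a^T$: this is the projection onto the free coordinates, and is exactly the operation that, in a Lagrangian derivation, corresponds to writing the Euler--Lagrange equations only for the coordinates $\bm{n}_a$ that are actually free, the $\bm{n}_b$ being prescribed functions of time and hence not varied. Collecting terms and invoking the definitions in Eqs.~(\ref{aa_ab1})--(\ref{aa_ab3}) turns the left side into $\bm{M}_{aa}\ddot{\bm{n}}_a + \bm{D}_{aa}\dot{\bm{n}}_a + \bm{K}_{aa}\bm{n}_a$, while the cross terms $\bm{M}_{ab}\ddot{\bm{n}}_b + \bm{D}_{ab}\dot{\bm{n}}_b + \bm{K}_{ab}\bm{n}_b$ and $\bm{E}_a^T\bm{g}$ move to the right side next to $\bm{E}_a^T\bm{f}_{ex}$, which is precisely Eq.~(\ref{dynamics equation reduced order}).

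Alternatively, and perhaps more transparently, I would re-run the Lagrangian derivation of Theorem~3.1 verbatim but with generalized coordinate $\bm{q} = \bm{n}_a$. The kinetic energy becomes $T = \frac{1}{2}(\bm{E}_a\dot{\bm{n}}_a + \bm{E}_b\dot{\bm{n}}_b)^T \bm{M} (\bm{E}_a\dot{\bm{n}}_a + \bm{E}_b\dot{\bm{n}}_b)$, so $\frac{\mathrm{d}}{\mathrm{d}t}\frac{\partial T}{\partial \dot{\bm{n}}_a} = \bm{M}_{aa}\ddot{\bm{n}}_a + \bm{M}_{ab}\ddot{\bm{n}}_b$ using the symmetry of $\bm{M}$; the strain and gravitational potentials give $\frac{\partial V}{\partial \bm{n}_a} = \bm{E}_a^T(\bm{K}\bm{n} + \bm{g}) = \bm{K}_{aa}\bm{n}_a + \bm{K}_{ab}\bm{n}_b + \bm{E}_a^T\bm{g}$ by the chain rule with $\partial \bm{n}/\partial \bm{n}_a = \bm{E}_a$; and the non-potential force projects to $\bm{E}_a^T \bm{f}_{np} = \bm{E}_a^T\bm{f}_{ex} - \bm{E}_a^T\bm{D}(\bm{E}_a\dot{\bm{n}}_a + \bm{E}_b\dot{\bm{n}}_b) = \bm{E}_a^T\bm{f}_{ex} - \bm{D}_{aa}\dot{\bm{n}}_a - \bm{D}_{ab}\dot{\bm{n}}_b$. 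Rearranging reproduces the claim. I would present whichever of the two routes is shorter in context; the block-substitution route is essentially one line once the decomposition of $\bm{n}$ is in hand.

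The routine part here is the bookkeeping of the block products. The one point that deserves care --- and the main obstacle --- is justifying that it is legitimate to write Lagrange's equations only with respect to $\bm{n}_a$ while treating $\bm{n}_b$ as an externally prescribed signal. This is the standard treatment of driven (rheonomic) constraints: the admissible variations are $\delta\bm{n} = \bm{E}_a\,\delta\bm{n}_a$, the constraint reaction forces do no virtual work on these variations and therefore drop out of the projected equations, and the reactions required to enforce $\bm{n}_b(t)$ are simply not computed (they could be recovered afterward by left-multiplying the full equation by $\bm{E}_b^T$). I would also record the elementary identities $\bm{E}_a^T\bm{E}_a = \bm{I}_{n_a}$, $\bm{E}_b^T\bm{E}_b = \bm{I}_{n_b}$, $\bm{E}_a^T\bm{E}_b = \bm{0}$, and $\bm{E}_a\bm{E}_a^T + \bm{E}_b\bm{E}_b^T = \bm{I}_{3n_n}$, since these are exactly what make both the substitution and the block identities exact rather than approximate.
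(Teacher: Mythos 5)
Your proposal is correct and follows essentially the same route as the paper: the paper likewise writes Lagrange's equations with $\bm{n}_a$ as the generalized coordinate, uses $\partial \bm{n}/\partial \bm{n}_a = \bm{E}_a$ and $\bm{f}_{npa} = \bm{E}_a^T\bm{f}_{np}$ to reduce this to the projected equation $\bm{E}_a^T(\bm{M}\ddot{\bm{n}} + \bm{D}\dot{\bm{n}} + \bm{K}\bm{n}) = \bm{E}_a^T(\bm{f}_{ex} - \bm{g})$, and then substitutes $\bm{n} = \bm{E}_a\bm{n}_a + \bm{E}_b\bm{n}_b$ and rearranges, exactly as in your block-substitution step. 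Your added remarks on the variational justification and the orthonormality identities are consistent with, and slightly more explicit than, the paper's treatment.
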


\begin{proof}
If the tensegrity structure has boundary constraints, the degree of freedom reduces to $n_a$, thus the free nodal coordinate vector $\bm{n}_a$ is the generalized coordinate. Then, the Lagrange's equation is: 
\begin{align}\frac{\mathrm{d}}{\mathrm{dt}}\left(\frac{\partial L}{\partial \dot{\bm{n}}_{a}}\right)-\frac{\partial L}{\partial \bm{n}_{a}}=\bm{f}_{n p a},
\label{Lagrange's in na}
\end{align}
where $\bm{f}_{n p a}$ is the non-potential force exerted on free nodal coordinate, and its relation with $\bm{f}_{n p}$ is:
\begin{align}\bm{f}_{n p a}=\bm{E}_{a}^{T} \bm{f}_{n p}.
\label{Lagrange's equ na right}
\end{align}
The relation of left side of Eq. (\ref{Lagrange's in na}) and Eq. (\ref{Lagarangian's equation}) is:
\begin{align}
\frac{\mathrm{d}}{\mathrm{dt}}\left(\frac{\partial L}{\partial \dot{\bm{n}}_{a}}\right) &-\frac{\partial L}{\partial \bm{n}_{a}}=\frac{\partial \bm{n}}{\partial \bm{n}_{a}}\left[\frac{\mathrm{d}}{\mathrm{dt}}\left(\frac{\partial L}{\partial \dot{\bm{n}}}\right)-\frac{\partial L}{\partial \bm{n}}\right] \\
&=\bm{E}_{a}^{T}\left[\frac{\mathrm{d}}{\mathrm{dt}}\left(\frac{\partial L}{\partial \dot{\bm{n}}}\right)-\frac{\partial L}{\partial \bm{n}}\right].
\label{Lagrange's equ na left}
\end{align}
Substitute Eqs. (\ref{Lagrange's equ na right}), (\ref{Lagrange's equ na left}) into Eq. (\ref{Lagrange's in na}), we have the dynamics of tensegrity for the free nodal coordinates:
\begin{align}\bm{E}_{a}^{T}(\bm{M} \ddot{\bm{n}}+\bm{D} \dot{\bm{n}}+\bm{K} \bm{n})=\bm{E}_{a}^{T}\left(\bm{f}_{e x}-\bm{g}\right).
\label{dynamic in na 1}
\end{align}
From the above equation, we can see that the dynamic equation Eq. (\ref{dynamic in na 1}) with constraints is just the $\bm{a}$ rows of Eq. (\ref{dynamics2}). Substitute Eq. (\ref{n=[Ea Eb][na;nb]}) into Eq. (\ref{dynamic in na 1}) and arrange terms related to $\bm{n}_a$ in left side, we obtain Eq. (\ref{dynamics equation reduced order}). We can also have the following form in term of $\ddot{\bm{n}}_{a}$ for programming convenience:
\begin{align}
\ddot{\bm{n}}_{a}=\bm{M}_{a a}^{-1} \bm{E}_{a}^{T}\left(\bm{f}_{e x}-\bm{g}-\bm{M} \bm{E}_{b} \ddot{\bm{n}}_{b}-\bm{D} \dot{\bm{n}}-\bm{K} \bm{n}\right).
\end{align}
\end{proof}

\subsection{Static equilibrium equation}
\label{Section 4}
The static equilibrium equation can be easily obtained from the derived dynamics equation. Here, we give the equilibrium equation in three standard forms and the compatibility equation. The equations developed in this section are useful for the derivation of the linearized dynamics in the next section.  

\begin{theorem}\label{statics}
The three following tensegrity static equilibrium equations are equivalent: \par
1). Tensegrity statics in terms of nodal coordinate vector $\bm{n}$:
\begin{align}
\bm{K} \bm{n} =\bm{f}_{e x}-\bm{g},~
\bm{K} = 
(\bm{C}^T\hat{\bm{x}}\bm{C})\otimes\textbf{I}_3.
\end{align}\par
2). Tensegrity statics in terms of force density vector $\bm{x}$:
\begin{align}
\bm{A}_{1}\bm{x}=\bm{f}_{e x}-\bm{g},~\bm{A}_{1}=\left(\bm{C}^{T} \otimes \textbf{I}_{3}\right) \bm{b.d.}(\bm{H}).
\end{align}\par
3). Tensegrity statics in terms of force vector $\bm{t}$:
\begin{align}\bm{A}_{2} \bm{t}=\bm{f}_{e x}-\bm{g}, ~\bm{A}_{2} =\left(\bm{C}^{T} \otimes \textbf{I}_{3}\right) \bm{b.d.}(\bm{H}) \hat{\bm{l}}^{-1}.
\end{align}
\end{theorem}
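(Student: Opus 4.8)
The plan is to establish form 1) directly from the nonlinear dynamics already derived, and then to obtain forms 2) and 3) as purely algebraic rewritings of the single linear map $\bm{n}\mapsto\bm{K}\bm{n}$, using the mixed-product property of the Kronecker product together with the definitions of $\bm{h}_i$, $\bm{x}$, and $\bm{t}$ recorded earlier. First I would set $\ddot{\bm{n}}=\bm{0}$ and $\dot{\bm{n}}=\bm{0}$ in the dynamics equation (\ref{dynamics2}); the mass and damping terms vanish and what remains is $\bm{K}\bm{n}=\bm{f}_{ex}-\bm{g}$ with $\bm{K}=(\bm{C}^T\hat{\bm{x}}\bm{C})\otimes\textbf{I}_3$, i.e. form 1). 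Equivalently one may invoke Eq. (\ref{Kn}), since $\partial V_e/\partial\bm{n}=\bm{K}\bm{n}$ is exactly the elastic restoring force that must balance $\bm{f}_{ex}-\bm{g}$ at equilibrium.

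Next, to pass to form 2), I would factor the stiffness matrix as
\[
\bm{K}=(\bm{C}^T\hat{\bm{x}}\bm{C})\otimes\textbf{I}_3=(\bm{C}^T\otimes\textbf{I}_3)(\hat{\bm{x}}\otimes\textbf{I}_3)(\bm{C}\otimes\textbf{I}_3),
\]
and then observe that $(\bm{C}\otimes\textbf{I}_3)\bm{n}$ is precisely the stacked element-vector $\bm{h}:=\begin{bmatrix}\bm{h}_1^T & \cdots & \bm{h}_{n_e}^T\end{bmatrix}^T$, because $\bm{h}_i=\bm{C}_i\otimes\textbf{I}_3\,\bm{n}$. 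Multiplying by the block-scaled matrix $\hat{\bm{x}}\otimes\textbf{I}_3$ produces the vector whose $i$th $3\times1$ block is $x_i\bm{h}_i$, and this is exactly $\bd(\bm{H})\bm{x}$, since the $i$th column-block of $\bd(\bm{H})$ is $\bm{h}_i$. Hence $\bm{K}\bm{n}=(\bm{C}^T\otimes\textbf{I}_3)\bd(\bm{H})\bm{x}=\bm{A}_1\bm{x}$, which combined with form 1) gives form 2).

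Finally, form 3) follows by substituting $\bm{x}=\hat{\bm{l}}^{-1}\bm{t}$ from Eq. (\ref{force density}) into $\bm{A}_1\bm{x}$, giving $\bm{A}_1\bm{x}=(\bm{C}^T\otimes\textbf{I}_3)\bd(\bm{H})\hat{\bm{l}}^{-1}\bm{t}=\bm{A}_2\bm{t}$; since the right-hand side $\bm{f}_{ex}-\bm{g}$ is untouched throughout, the three equations express the same balance and are therefore equivalent. I expect the only real obstacle to be bookkeeping rather than insight: one must confirm that the block-stacking conventions used in $\bm{h}$, in $\hat{\bm{x}}\otimes\textbf{I}_3$, and in $\bd(\bm{H})$ all agree block-for-block, and that the diagonal $n_e\times n_e$ matrix $\hat{\bm{l}}^{-1}$ passes through the block structure as written. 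No deeper idea is needed, since each form is just a successive rewriting of $\bm{K}\bm{n}$.
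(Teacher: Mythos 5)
Your proposal is correct and follows essentially the same route as the paper: set $\ddot{\bm{n}}=\dot{\bm{n}}=\bm{0}$ in Eq. (\ref{dynamics2}) to get form 1), factor $\bm{K}\bm{n}=(\bm{C}^T\otimes\textbf{I}_3)(\hat{\bm{x}}\otimes\textbf{I}_3)(\bm{C}\otimes\textbf{I}_3)\bm{n}=(\bm{C}^T\otimes\textbf{I}_3)\,\bd(\bm{H})\,\bm{x}$ to get form 2), and substitute $\bm{x}=\hat{\bm{l}}^{-1}\bm{t}$ for form 3). Your block-by-block verification that $(\hat{\bm{x}}\otimes\textbf{I}_3)\bm{h}=\bd(\bm{H})\bm{x}$ is just a more explicit way of stating the hat-swap identity the paper invokes, so the argument is the same in substance.
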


\begin{proof}
Let the acceleration part $\ddot{\bm{n}}$ and velocity part $\dot{\bm{n}}$ in Eq. (\ref{dynamics2}) be zeros, the dynamics equation will be reduced into a static equilibrium equation in terms of nodal coordinate vector $\bm{n}$:
\begin{align}\label{static equilibrium1}
\bm{K} \bm{n}=\bm{f}_{e x}-\bm{g}.
\end{align}
This proofs the first statement of Theorem \ref{statics}. \par
Since $\bm{K}$, given in Eq. (\ref{K}), is a function of $\bm{n}$, the product $\bm{Kn}$ is nonlinear in $\bm{n}$. Eq. (\ref{static equilibrium1}) is a nonlinear equilibrium equation. However, the term $\bm{Kn}$ can be also written linearly in terms of force density vector $\bm{x}$:
\begin{align}
\bm{K} \bm{n} &=\left(\bm{C}^{T} \otimes \textbf{I}_{3}\right)\left(\widehat{\bm{x}} \otimes \textbf{I}_{3}\right)\left(\bm{C} \otimes \textbf{I}_{3}\right) \bm{n} \\
&=\left(\bm{C}^{T} \otimes \textbf{I}_{3}\right) \reallywidehat{\left(\textbf{I}_{n_{e}} \otimes \textbf{I}_{3,1} \bm{x}\right)}\left(\bm{C} \otimes \textbf{I}_{3}\right) \bm{n} \\
&=\left(\bm{C}^{T} \otimes \textbf{I}_{3}\right)\reallywidehat{\left(\left(\bm{C} \otimes \textbf{I}_{3}\right) \bm{n}\right)} \textbf{I}_{n_{e}} \otimes \textbf{I}_{3,1} \bm{x} \\
&=\left(\bm{C}^{T} \otimes \textbf{I}_{3}\right) \bm{b.d.}(\bm{H}) \bm{x}.
\label{Kn=~~~x}
\end{align}
Substitute Eq. (\ref{Kn=~~~x}) into Eq. (\ref{static equilibrium1}), we have a linear form of equilibrium equation:
\begin{align}
\bm{A}_{1}\bm{x}=\bm{f}_{e x}-\bm{g},
\label{static equilibrium2}
\end{align}
where $\bm{A}_{1} \in \mathbb{R}^{3 n_n\times n_e}$ is the equilibrium matrix with force density $\bm{x}$ as variable:
\begin{align}\label{A_1}
\bm{A}_{1}=\left(\bm{C}^{T} \otimes \textbf{I}_{3}\right) \bm{b.d.}(\bm{H}),
\end{align}
where $\bm{b.d.}(V)$ is the block diagonal matrix of $V$. This proofs the first and second statements of Theorem \ref{statics} are equivalent. 

The equilibrium equation can also be written linearly in terms of force vector $\bm{t}$ by substitute Eq. (\ref{force density}) into Eq. (\ref{static equilibrium2}):
\begin{align}\bm{A}_{2} \bm{t}=\bm{f}_{e x}-\bm{g},
\label{static equilibrium3}
\end{align}
where $\bm{A}_{2} \in \mathbb{R}^{3 n_n\times n_e}$ is the equilibrium matrix with force vector $\bm{t}$ as variable:
\begin{align}
\bm{A}_{2} =\bm{A}_{1} \hat{\bm{l}}^{-1} =\left(\bm{C}^{T} \otimes \textbf{I}_{3}\right) \bm{b.d.}(\bm{H}) \hat{\bm{l}}^{-1}.
\end{align}
This proofs the second and third statements of Theorem \ref{statics} are equivalent. 
\end{proof}

\subsection{Compatibility equation}

The compatibility equation is the relation between $\mathrm{d}\bm{n}$ and  $\mathrm{d}\bm{l}$ that guarantees the structure deformations are physically valid. The compatibility equation of the \textit{i}th element can be obtained by take the derivative of Eq. (\ref{li}):
\begin{align}l_{i}^{-1} \bm{h}_{i}^{T}\left(\bm{C}_{i} \otimes \textbf{I}_{3}\right) \mathrm{d} \bm{n}= \mathrm{d} l_{i}.
\label{dli}
\end{align}
Stack all the structure element equations in a column, one can obtain: 
\begin{align}\bm{B}_{l}  \mathrm{d} \bm{n}= \mathrm{d} \bm{l},
\label{compatibility equation}
\end{align}
where $\bm{B}_{l}\in \mathbb{R}^{n_e\times 3 n_n }$ is the compatibility matrix of the structure:
\begin{align}\bm{B}_{l}=\hat{\bm{l}}^{-1} \bm{b.d.}(\bm{H})^{T}\left(\bm{C} \otimes \textbf{I}_{3}\right).
\label{Bl}
\end{align}
Note that the compatibility and equilibrium matrix have the following relationship: $\bm{B}_{l}^{T}=\bm{A}_{2}$, which can also be proved by the principle of virtual work.

\section{Linearized tensegrity dynamics}
\label{section 5}
\subsection{Linearized dynamics without constraints}

\begin{theorem}
The finite element linearized tensegrity dynamics with no constraints has the following analytical form: \par
\begin{align}\label{linear_dyn}
\bm{M}\mathrm{d} \ddot{\bm{n}}+\bm{D} \mathrm{d}\dot{\bm{n}}+\bm{K}_T\mathrm{d} \bm{n}=\mathrm{d}\bm{f}_{e x},
\end{align}
where the tangent stiffness matrix $\bm{K}_{T}$ satisfies:
\begin{align}
\bm{K}_{T}=\left(\bm{C}^{T} \widehat{\bm{x}} \bm{C}\right) \otimes \textbf{I}_{3}+\bm{A}_{1} \widehat{\bm{E}_t} \widehat{\bm{A}} \hat{\bm{l}}^{-3} \bm{A}_{1}^{T},
\end{align}
and $\bm{M}$ is given in Eq. (\ref{M}), $D$ is damping matrix, and $f_{ex}$ is external forces on the structure nodes.
\end{theorem}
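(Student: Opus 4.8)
The plan is to obtain Eq.~(\ref{linear_dyn}) by linearizing the nonlinear equation of motion, Eq.~(\ref{dynamics2}), about an equilibrium configuration $\bm{n}$ --- one for which $\bm{K}\bm{n}=\bm{f}_{ex}-\bm{g}$ holds by Theorem~\ref{statics} --- replacing $\bm{n}$ by $\bm{n}+\mathrm{d}\bm{n}$, $\dot{\bm{n}}$ by $\mathrm{d}\dot{\bm{n}}$, $\ddot{\bm{n}}$ by $\mathrm{d}\ddot{\bm{n}}$, and discarding terms of second order. The key structural remark is that $\bm{M}$ (Eq.~(\ref{M})) and $\bm{g}$ (Eq.~(\ref{g})) are built only from the member masses and the connectivity matrix $\bm{C}$, and $\bm{D}$ is a constant damping matrix; none of the three depends on $\bm{n}$, so their first variations vanish and the inertial, damping, and gravity terms contribute $\bm{M}\mathrm{d}\ddot{\bm{n}}$, $\bm{D}\mathrm{d}\dot{\bm{n}}$, and $\bm{0}$, respectively. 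The only configuration-dependent internal term is the elastic restoring force $\bm{K}\bm{n}=\partial V_e/\partial\bm{n}$ of Eq.~(\ref{Kn}); writing its first variation as $\bm{K}_T\,\mathrm{d}\bm{n}$ is exactly the definition $\bm{K}_T:=\partial(\bm{K}\bm{n})/\partial\bm{n}=\partial^2 V_e/\partial\bm{n}^2$ of the tangent stiffness. Subtracting the equilibrium identity from the perturbed equation cancels every constant term and leaves $\bm{M}\mathrm{d}\ddot{\bm{n}}+\bm{D}\mathrm{d}\dot{\bm{n}}+\bm{K}_T\mathrm{d}\bm{n}=\mathrm{d}\bm{f}_{ex}$, so everything reduces to evaluating $\bm{K}_T$ in closed form.

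To compute $\bm{K}_T$ I would differentiate the gradient form already in hand, Eq.~(\ref{dV_edn1}), $\bm{K}\bm{n}=\sum_{i=1}^{n_e} t_i\,\partial l_i/\partial\bm{n}$, once more. The product rule gives, for element $i$, the contribution $(\partial t_i/\partial l_i)(\partial l_i/\partial\bm{n})(\partial l_i/\partial\bm{n})^T+t_i\,\partial^2 l_i/\partial\bm{n}^2$, into which three pieces feed: $\partial t_i/\partial l_i=E_{ti}A_i/l_{0i}$ from the material law, Eq.~(\ref{d sigma}); $\partial l_i/\partial\bm{n}=l_i^{-1}(\bm{C}_i^T\bm{C}_i)\otimes\textbf{I}_3\,\bm{n}=l_i^{-1}(\bm{C}_i^T\otimes\textbf{I}_3)\bm{h}_i$ from Eq.~(\ref{dli_dni}); and, differentiating the latter again, the Hessian of a Euclidean norm, $\partial^2 l_i/\partial\bm{n}^2=l_i^{-1}(\bm{C}_i^T\bm{C}_i)\otimes\textbf{I}_3-l_i^{-3}\big((\bm{C}_i^T\otimes\textbf{I}_3)\bm{h}_i\big)\big((\bm{C}_i^T\otimes\textbf{I}_3)\bm{h}_i\big)^T$. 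Substituting and using the force-density identity $t_i/l_i=x_i$, the $i$th contribution separates into $x_i(\bm{C}_i^T\bm{C}_i)\otimes\textbf{I}_3$ and a rank-one piece that is a scalar multiple of $(\bm{C}_i^T\otimes\textbf{I}_3)\bm{h}_i\bm{h}_i^T(\bm{C}_i\otimes\textbf{I}_3)$.

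Summing over $i$, the first pieces assemble, via $\sum_i x_i\bm{C}_i^T\bm{C}_i=\bm{C}^T\widehat{\bm{x}}\bm{C}$, into the geometric (prestress) stiffness $(\bm{C}^T\widehat{\bm{x}}\bm{C})\otimes\textbf{I}_3$. The rank-one pieces I would repackage with the identity $\sum_i d_i(\bm{C}_i^T\otimes\textbf{I}_3)\bm{h}_i\bm{h}_i^T(\bm{C}_i\otimes\textbf{I}_3)=\bm{A}_1\,\widehat{\bm{d}}\,\bm{A}_1^T$, valid for any diagonal $\widehat{\bm{d}}$, which follows from $\bm{A}_1=(\bm{C}^T\otimes\textbf{I}_3)\bm{b.d.}(\bm{H})$ (Eq.~(\ref{A_1})) and the observation that $\bm{b.d.}(\bm{H})\,\widehat{\bm{d}}\,\bm{b.d.}(\bm{H})^T$ is block diagonal with $i$th block $d_i\bm{h}_i\bm{h}_i^T$. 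Collecting the scalar multiplying the $i$th rank-one term gives $l_i^{-2}\big(E_{ti}A_i/l_{0i}-x_i\big)$; for linear-elastic members (secant modulus equal to tangent modulus), Eq.~(\ref{force density}) yields $E_iA_i/l_{0i}-x_i=E_iA_i/l_i$, so this scalar is exactly $E_{ti}A_i l_i^{-3}$, i.e.\ $\widehat{\bm{d}}=\widehat{\bm{E}_t}\widehat{\bm{A}}\hat{\bm{l}}^{-3}$ (the same identification follows from $l_i\approx l_{0i}$ in the small-strain regime), and the rank-one sum becomes $\bm{A}_1\widehat{\bm{E}_t}\widehat{\bm{A}}\hat{\bm{l}}^{-3}\bm{A}_1^T$. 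Adding the two contributions produces the claimed $\bm{K}_T$.

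I expect the main obstacle to be exactly this final reassembly: pushing the Kronecker-product and block-diagonal bookkeeping through without error, and in particular absorbing cleanly the mismatch between the secant modulus appearing in $\bm{x}$ (Eq.~(\ref{force density})) and the tangent modulus $E_{ti}$ produced by $\mathrm{d}t_i$ (equivalently, the $l_i$ versus $l_{0i}$ discrepancy), so that the material term really collapses to the compact $\bm{A}_1\widehat{\bm{E}_t}\widehat{\bm{A}}\hat{\bm{l}}^{-3}\bm{A}_1^T$ rather than to a bulkier element-by-element sum. A secondary point requiring care is the second derivative $\partial^2 l_i/\partial\bm{n}^2$, which must be taken consistently with the denominator-layout convention used elsewhere in the paper, since both the sign and the $l_i^{-3}$ weighting of its rank-one part are what combine with the material term to give the correct coefficient.
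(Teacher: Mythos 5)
Your proposal is correct and reaches exactly the paper's formula, but it gets there by a different computation of the tangent stiffness. The paper stays at the system level: it takes the total derivative of Eq.~(\ref{dynamics2}) and splits $\partial(\bm{K}\bm{n})/\partial\bm{n}$ using the bilinear representation $\bm{K}\bm{n}=\bm{A}_1\bm{x}$ from the statics theorem, so the geometric part appears as $\bm{K}$ with $\bm{x}$ frozen and the material part comes from the chain rule $\partial\bm{x}/\partial\bm{n}=\bm{B}_l^T\hat{\bm{l}}^{-2}\widehat{\bm{A}}\widehat{\bm{E}_t}=\bm{A}_1\hat{\bm{l}}^{-3}\widehat{\bm{A}}\widehat{\bm{E}_t}$ together with $\partial(\bm{A}_1\bm{x})/\partial\bm{x}=\bm{A}_1^T$, i.e.\ it reuses the equilibrium matrix (\ref{A_1}) and compatibility matrix (\ref{Bl}) already derived. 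You instead compute the Hessian of the strain energy element by element from Eq.~(\ref{dV_edn1}), using $\partial t_i/\partial l_i=E_{ti}A_i/l_{0i}$, the Hessian of the Euclidean norm, and a block-diagonal assembly identity to repackage the rank-one terms as $\bm{A}_1\widehat{\bm{d}}\bm{A}_1^T$; this is more elementary and self-contained, makes the geometric/material split transparent at the element level, and — a genuine plus — makes explicit that the rank-one coefficient is exactly $l_i^{-2}(E_{ti}A_i/l_{0i}-x_i)$, which collapses to $E_{ti}A_il_i^{-3}$ exactly only when secant and tangent moduli coincide (or in the small-strain limit $l_i\approx l_{0i}$), a subtlety the paper's derivation hides by inserting $\widehat{\bm{E}_t}$ into the force-density formula (\ref{force density}) before differentiating. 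The trade-off is that your route requires the extra assembly bookkeeping you anticipate, whereas the paper's matrix-calculus route is shorter because the heavy lifting was done in the statics and compatibility sections; your framing of the linearization about an equilibrium (subtracting the static identity) is also slightly more restrictive than the paper's plain total-derivative argument, though it yields the same equation since $\bm{M}$, $\bm{D}$, and $\bm{g}$ are configuration-independent.
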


\begin{proof}

Since the stiffness matrix $\bm{K}$ is a function of nodal coordinate vector $\bm{n}$, the dynamics Eq. (\ref{dynamics2}) is nonlinear. The mass matrix $\bm{M}$ and damping matrix $\bm{D}$ are constant. To linearize the dynamic equation, we can take the total derivative of Eq. (\ref{dynamics2}) and keep the linear terms:
\begin{align}
\bm{M}\mathrm{d} \ddot{\bm{n}}+\bm{D} \mathrm{d}\dot{\bm{n}}+\bm{K}_T\mathrm{d} \bm{n}=\mathrm{d}\bm{f}_{e x}.
\label{linear dynamics}
\end{align}
The tangent stiffness matrix $\bm{K}_T$ can be calculated as:
\begin{align}
\bm{K}_{T}=\left[\frac{\partial(\bm{K} \bm{n})}{\partial \bm{n}}\right]^{T}=\bm{K}+\left[\frac{\partial \bm{x}}{\partial \bm{n}} \frac{\partial(\bm{K} \bm{n})}{\partial \bm{x}}\right]^{T}.
\label{Kt tangent stiffness matrix 0}
\end{align}
The partial derivative of force density vector $\bm{x}$ to nodal coordinate vector $\bm{n}$ can be obtained from Eq. (\ref{force density}):
\begin{align}
\frac{\partial \bm{x}}{\partial \bm{n}} &=\frac{\partial\left[\widehat{\bm{E}_t} \widehat{\bm{A}}\left(\bm{l}_{0}^{-1}-\bm{l}^{-1}\right)\right]}{\partial \bm{n}} \\
&=\frac{\partial \bm{l}}{\partial \bm{n}} \frac{\partial\left(-\bm{l}^{-1}\right)}{\partial \bm{l}} \widehat{\bm{A}} \widehat{\bm{E}_t} \\
&=\bm{B}_{l}^{T} \hat{\bm{l}}^{-2} \widehat{\bm{A}} \widehat{\bm{E}_t} \\
&=\bm{A}_{1} \hat{\bm{l}}^{-3} \widehat{\bm{A}} \widehat{\bm{E}_t}.
\label{x by n}
\end{align}

The derivative of $\bm{Kn}$ with respect to force density $\bm{x}$ is derived from Eq. (\ref{A_1}), then we have: 
\begin{align}\frac{\partial(\bm{K} \bm{n})}{\partial \bm{x}}=\frac{\partial\left(\bm{A}_{1} \bm{x}\right)}{\partial \bm{x}}=\bm{A}_{1}^{T}.
\label{kn by x}
\end{align}
Substitute Eqs. (\ref{x by n}) and (\ref{kn by x}) into Eq. (\ref{Kt tangent stiffness matrix 0}), one can obtain the tangent stiffness matrix $\bm{K}_{T}$:
\begin{align}
\bm{K}_{T}=\left(\bm{C}^{T} \widehat{\bm{x}} \bm{C}\right) \otimes \textbf{I}_{3}+\bm{A}_{1} \widehat{\bm{E}_t} \widehat{\bm{A}} \hat{\bm{l}}^{-3} \bm{A}_{1}^{T}.
\label{Kt tangent stiffness matrix 2}
\end{align}
The first part of Eq. (\ref{Kt tangent stiffness matrix 2}) is usually called the geometry stiffness matrix $\bm{K}_{G}=\left(\bm{C}^{T} \widehat{\bm{x}} \bm{C}\right) \otimes \textbf{I}_{3}$, which is determined by structure topology and force density. The second part is called the material stiffness $\bm{K}_{E}=\bm{A}_{1} \widehat{\bm{E}_t} \widehat{\bm{A}} \hat{\bm{l}}^{-3} \bm{A}_{1}^{T}$, which is governed by structure configuration and elements' axial stiffness.\par
\end{proof}

The linearized dynamics equation can also be written into a standard state space form:
\begin{align}
   \frac{\mathrm{d}}{\mathrm{d}t} \begin{bmatrix}\mathrm{d}\bm{{n}}\\\mathrm{d}\bm{\dot{n}}\end{bmatrix}=\begin{bmatrix}\bm{ 0} & \textbf{I}\\ -\bm{M}^{-1}\bm{K}_T& -\bm{M}^{-1}\bm{D}\end{bmatrix}    \begin{bmatrix}\mathrm{d}\bm{n}\\\mathrm{d}\bm{\dot{n}}\end{bmatrix}+\begin{bmatrix}\bm{0}\\\mathrm{d}\bm{f}_{ex}\end{bmatrix},
\end{align}
which can be used to integrate structure and control designs.

\subsection{Modal analysis of the linearized model with no constraints}

By setting damping matrix $\bm{D}= \bm{0}$ and external force $\bm{f}_{ex}=\bm{0}$ in Eq. (\ref{linear_dyn}), we have the free vibration response of a dynamical system:
\begin{align}
\bm{M}\mathrm{d} \ddot{\bm{n}}+\bm{K}_T\mathrm{d} \bm{n}=\bm{0}.
\label{free vibration}
\end{align}
The solution to the homogeneous Eq.  (\ref{free vibration}) have the following form:
\begin{align} \mathrm{d} \bm{n}=\bm{\varphi} \sin (\omega t-\theta),
\label{dn}\end{align}
which represents a periodic response with a typical frequency $\omega$. Substitute Eq. (\ref{dn}) into Eq. (\ref{free vibration}), we have:
\begin{align}
\left(\bm{K}_{T}-\omega^{2} \bm{M}\right) \bm{\varphi} \sin (\omega t-\theta)=\bm{0},
\end{align}
and since $\sin (\omega t-\theta)\ne0$ for most times, we have:
\begin{align}
\bm{K}_{T} \bm{\varphi}=\omega^{2}\bm{M} \bm{\varphi},
\label{eigenvalue problem}
\end{align}
which is a standard eigenvalue problem. The $\omega$ is known as the natural frequency of the system and $\bm{\varphi}$ is the corresponding mode.

\subsection{Linearized dynamics with constraints}

\begin{theorem}
The finite element linearized tensegrity dynamics with constraints has the following analytical form: 
\begin{align}\nonumber
& \bm{M}_{a a} \mathrm{d}\ddot{\bm{n}}_{a} + \bm{D}_{a a} \mathrm{d}\dot{\bm{n}}_{a}+\bm{K}_{Ta a} \mathrm{d}\bm{n}_{a}\\ &= \bm{E}_{a}^{T} \mathrm{d}\bm{f}_{e x} -\bm{M}_{a b} \mathrm{d}\ddot{\bm{n}}_{b}-\bm{D}_{a b} \mathrm{d}\dot{\bm{n}}_{b} -\bm{K}_{Tab} \mathrm{d}\bm{n}_{b},
\end{align}
where the tangent stiffness matrix $\bm{K}_{T_{aa}}$,$\bm{K}_{T_{ab}}$ is:
\begin{align}
 \bm{K}_{Taa} & =\bm{E}_{a}^{T} \bm{K}_T \bm{E}_{a},~\bm{K}_{Tab} =\bm{E}_{a}^{T} \bm{K}_T \bm{E}_{b},
\end{align}
$\bm{M}_{a a}$, $\bm{M}_{a b}$, $\bm{D}_{a a}$, $\bm{D}_{a b}$ are given in Eqs. (\ref{aa_ab1}) - (\ref{aa_ab3}), and $\bm{K}_T$ is given in Eq. (\ref{Kt tangent stiffness matrix 2}).
\end{theorem}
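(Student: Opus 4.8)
The plan is to obtain this result by linearizing the constrained nonlinear dynamics of Eq.~(\ref{dynamic in na 1}) in exactly the same way that Eq.~(\ref{linear_dyn}) was obtained from Eq.~(\ref{dynamics2}), and then re-introducing the splitting $\bm{n}=\bm{E}_a\bm{n}_a+\bm{E}_b\bm{n}_b$. First I would start from the reduced-order balance $\bm{E}_a^T(\bm{M}\ddot{\bm{n}}+\bm{D}\dot{\bm{n}}+\bm{K}\bm{n})=\bm{E}_a^T(\bm{f}_{ex}-\bm{g})$ and take its total derivative about an equilibrium configuration. Since $\bm{M}$, $\bm{D}$, and $\bm{E}_a$ are constant and $\bm{g}$ depends only on the (fixed) masses and connectivity, the only nonlinear term is $\bm{K}\bm{n}$, whose differential is $\mathrm{d}(\bm{K}\bm{n})=\bm{K}_T\,\mathrm{d}\bm{n}$ with $\bm{K}_T$ the tangent stiffness matrix of Eq.~(\ref{Kt tangent stiffness matrix 2}) — this is precisely the content of the unconstrained linearization, and it carries over unchanged because left-multiplication by the constant matrix $\bm{E}_a^T$ commutes with the differential $\mathrm{d}$. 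This yields
\begin{align}
\bm{E}_a^T\left(\bm{M}\,\mathrm{d}\ddot{\bm{n}}+\bm{D}\,\mathrm{d}\dot{\bm{n}}+\bm{K}_T\,\mathrm{d}\bm{n}\right)=\bm{E}_a^T\,\mathrm{d}\bm{f}_{ex}.
\end{align}

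Next I would substitute the decomposition $\mathrm{d}\bm{n}=\bm{E}_a\,\mathrm{d}\bm{n}_a+\bm{E}_b\,\mathrm{d}\bm{n}_b$ from Eq.~(\ref{n=[Ea Eb][na;nb]}), together with its first and second time derivatives, into the displayed equation, expand, and group the six resulting terms by whether they multiply a free increment ($\mathrm{d}\bm{n}_a,\mathrm{d}\dot{\bm{n}}_a,\mathrm{d}\ddot{\bm{n}}_a$) or a prescribed boundary increment ($\mathrm{d}\bm{n}_b,\mathrm{d}\dot{\bm{n}}_b,\mathrm{d}\ddot{\bm{n}}_b$). Recognizing $\bm{E}_a^T\bm{M}\bm{E}_a=\bm{M}_{aa}$, $\bm{E}_a^T\bm{M}\bm{E}_b=\bm{M}_{ab}$ and the analogous identities for $\bm{D}$ from Eqs.~(\ref{aa_ab1})--(\ref{aa_ab3}), and introducing $\bm{K}_{Taa}=\bm{E}_a^T\bm{K}_T\bm{E}_a$ and $\bm{K}_{Tab}=\bm{E}_a^T\bm{K}_T\bm{E}_b$, then moving the boundary-increment terms to the right-hand side, reproduces exactly the claimed equation.

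I do not expect a genuine obstacle here; the derivation is a direct specialization of two results already established. The one point that deserves care — and the natural place for a reader to stumble — is the status of the constrained coordinates: $\mathrm{d}\bm{n}_b$, $\mathrm{d}\dot{\bm{n}}_b$, $\mathrm{d}\ddot{\bm{n}}_b$ are not variations of generalized coordinates but prescribed data, so they must be retained as forcing terms on the right, exactly mirroring how $\ddot{\bm{n}}_b$ is handled in the nonlinear constrained dynamics of Eq.~(\ref{dynamics equation reduced order}). It is also worth remarking (though not strictly required) that the linearization is taken about an equilibrium satisfying the static equation $\bm{K}\bm{n}=\bm{f}_{ex}-\bm{g}$ of Theorem~\ref{statics}, which is what licenses dropping the zeroth-order terms after differentiation.
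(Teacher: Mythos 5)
Your proposal is correct and follows essentially the same route as the paper: take the total derivative of the constrained equation $\bm{E}_a^T(\bm{M}\ddot{\bm{n}}+\bm{D}\dot{\bm{n}}+\bm{K}\bm{n})=\bm{E}_a^T(\bm{f}_{ex}-\bm{g})$, replace $\mathrm{d}(\bm{K}\bm{n})$ by $\bm{K}_T\mathrm{d}\bm{n}$, substitute $\mathrm{d}\bm{n}=\bm{E}_a\mathrm{d}\bm{n}_a+\bm{E}_b\mathrm{d}\bm{n}_b$, and identify the $aa$/$ab$ blocks while moving the prescribed boundary increments to the right-hand side. No gaps; the added remark about linearizing around an equilibrium is harmless extra context.
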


\begin{proof}

To linearize the dynamic equation considering constraints, we can take the total derivative of Eq. (\ref{dynamic in na 1}) and keep the linear terms:
\begin{align}
\bm{E}_{a}^{T}(\bm{M}\mathrm{d} \ddot{\bm{n}}+\bm{D} \mathrm{d}\dot{\bm{n}}+\bm{K}_T\mathrm{d} \bm{n})=\bm{E}_{a}^{T}\mathrm{d}\bm{f}_{e x}.
\label{linear dynamics with constraints}
\end{align}
Substitute Eq.(\ref{n=[Ea Eb][na;nb]}) into Eq.  (\ref{linear dynamics with constraints}), we have:
\begin{align}\nonumber
& \bm{M}_{a a} \mathrm{d}\ddot{\bm{n}}_{a} + \bm{D}_{a a} \mathrm{d}\dot{\bm{n}}_{a}+\bm{K}_{Ta a} \mathrm{d}\bm{n}_{a}\\ &= \bm{E}_{a}^{T} \mathrm{d}\bm{f}_{e x} -\bm{M}_{a b} \mathrm{d}\ddot{\bm{n}}_{b}-\bm{D}_{a b} \mathrm{d}\dot{\bm{n}}_{b} -\bm{K}_{Tab} \mathrm{d}\bm{n}_{b},
\label{linear dynamics with constraints2}
\end{align}
in which $\bm{K}_{Ta a}$ and $\bm{K}_{Ta b}$ are given by:
\begin{align}
    \bm{K}_{Taa} & =\bm{E}_{a}^{T} \bm{K}_T \bm{E}_{a},~\bm{K}_{Tab}=\bm{E}_{a}^{T} \bm{K}_T \bm{E}_{b}.
\end{align}
\end{proof}

Similarly, one can write the linearized dynamics equation with constraints into a state space form:
\begin{equation}
\begin{aligned}
\frac{\mathrm{d}}{\mathrm{d}t} \begin{bmatrix}\mathrm{d}\bm{n}_a\\\mathrm{d}\dot{\bm{n}_{a}}\end{bmatrix}=\begin{bmatrix}\bm{ 0}  & \textbf{I}\\ -\bm{M}_{aa}^{-1}K_{Ta a} & \bm{-M}_{aa}^{-1} \bm{D}_{aa}\end{bmatrix}    \begin{bmatrix}\mathrm{d}\bm{n}_a\\\mathrm{d}\dot{\bm{n}}_a\end{bmatrix}\\+\begin{bmatrix}\bm{0}\\\bm{E}_{a}^{T}\mathrm{d}\bm{f}_{ex}-\bm{M}_{a b} \mathrm{d}\ddot{\bm{n}}_{b}-\bm{D}_{a b} \mathrm{d}\dot{\bm{n}}_{b} -\bm{K}_{Tab} \mathrm{d}\bm{n}_{b}\end{bmatrix},
\end{aligned}
\label{dynamic equation with constraints in first order}
\end{equation}
as an interface to integrate structure and control designs.



\subsection{Modal analysis of the linearized model with constraints}
Similarly, for tensegrity dynamics with constraints, the free vibration response can be obtained from Eq. (\ref{linear dynamics with constraints2}) by neglecting damping and external force:
\begin{align}
\bm{M}_{aa}\mathrm{d} \ddot{\bm{n}}_a+\bm{K}_{Taa}\mathrm{d} \bm{n}_a=\bm{0}.
\label{free vibration with constraints}
\end{align}
By similar derivation from Eq. (\ref{free vibration}) to Eq. (\ref{eigenvalue problem}), the eigenvalue problem of tensegrity with constraints is given by:
\begin{align}
\bm{K}_{Taa} \bm{\varphi}=\omega^{2}\bm{M}_{aa} \bm{\varphi},
\label{eigenvalue problem with constratints}
\end{align}
where $\omega$ is the natural frequency of the system and $\bm{\varphi}$ is the corresponding mode.

\section{Numerical examples}
\label{section 6}
We believe a general dynamics should be capable of conducting these kinds of studies for any tensegrity structures: 1. Rigid body dynamics with acceptable errors (by setting relatively high stiffness for bars in the FEM simulation). 2. Finite element method (FEM) dynamics that allow bars and strings to have elastic or plastic deformations. 3. The dynamics should allow various kinds of boundary conditions, for example, nodes are fixed or in the presence of static or dynamic external forces (i.e., gravitational force, some specified forces, or arbitrary seismic vibrations, etc.). 4. Accurate modal analysis, including natural frequency and corresponding modes. 

Thus, three dynamic examples (a double pendulum, a cantilever truss with external force, and seismic analysis of a tensegrity tower) are carefully selected and studied to verify the proposed nonlinear tensegrity FEM dynamics (we call it TsgFEM). The obtained results are compared with other dynamics simulation methods, including analytical results and commercial FEM software ANSYS.


\subsection{Example 1: Dynamics of a double pendulum}



This example is chosen to check if the structure behaves close to rigid body dynamics with an acceptable error if we use high stiffness materials for bars. The time history of nodal positions will be compared with analytical results from rigid body dynamics. 
The initial configuration of a double pendulum is shown in Fig.\ref{db_config}, and node 1 is fixed to the wall. The two bars have same length $l=1$ m, same mass $m=1$ kg, and same hanging angle $\theta_1=\theta_2=45^\circ$. The cross-sectional area and Young's modulus are $A=10^{-4}$ m$^2$ and $E=2.06\times10^{11}$ Pa, respectively. In the analysis of the dynamics, the time step and total simulation time are chosen to be $\Delta t=5\times10^{-5}$ s and $t=5$ s. There is no damping, and only gravitational force is considered as the external force.

Fig.\ref{Nodal position of the double pendulum} shows the time history of nodal position by TsgFEM. Fig.\ref{Error of Nodal position compared with RBD} gives the error of the nodal position between TsgFEM and the analytical solution obtained from rigid body dynamics, whose analytical equations are derived in the Appendix \ref{app_pendulum}. Fig.\ref{Error of bar length compared with RBD} is the error of bar length between TsgFEM and rigid body dynamics. From these figures, we can see that the error of bar length ($10^{-7}\sim10^{-6}$ m) and error of nodal coordinates ($10^{-5}\sim10^{-4}$ m) are relatively small and oscillate periodically with high frequency. This is reasonable because the strain is allowed in TsgFEM, and the high-frequency oscillation is caused by the high axial stiffness of the bar. The TsgFEM can capture the periodic elongation of bars if the time step is properly chosen. In signal processing of Nyquist rate, the sampling frequency should be at least two times of signal frequency, and normally engineers use 5 \textasciitilde~10 times, we choose a time step of $\Delta t={\mathrm{T}_{min}}/{8}={\pi}/{4\omega_{max}}$, where ${\mathrm{T}_{min}}$ is the shortest period corresponding to the highest natural frequency calculated by Eq. (\ref{eigenvalue problem with constratints}) to capture the highest vibration mode of the bar as well as guarantee the convergence in solving the dynamics equation.



\begin{figure}
    \centering
    \includegraphics[scale=0.5]{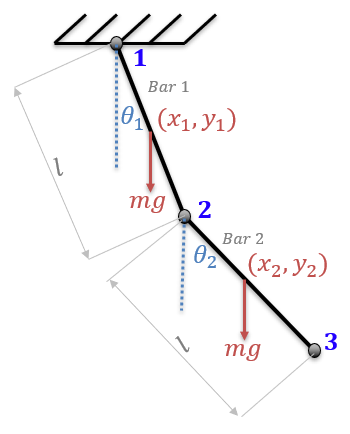}
    \caption{Schematic diagram of a double pendulum in the presence of gravity. The two bars have same mass and length.}
    \label{db_config}
\end{figure}


\begin{figure}
    \centering
    \includegraphics[scale=0.55]{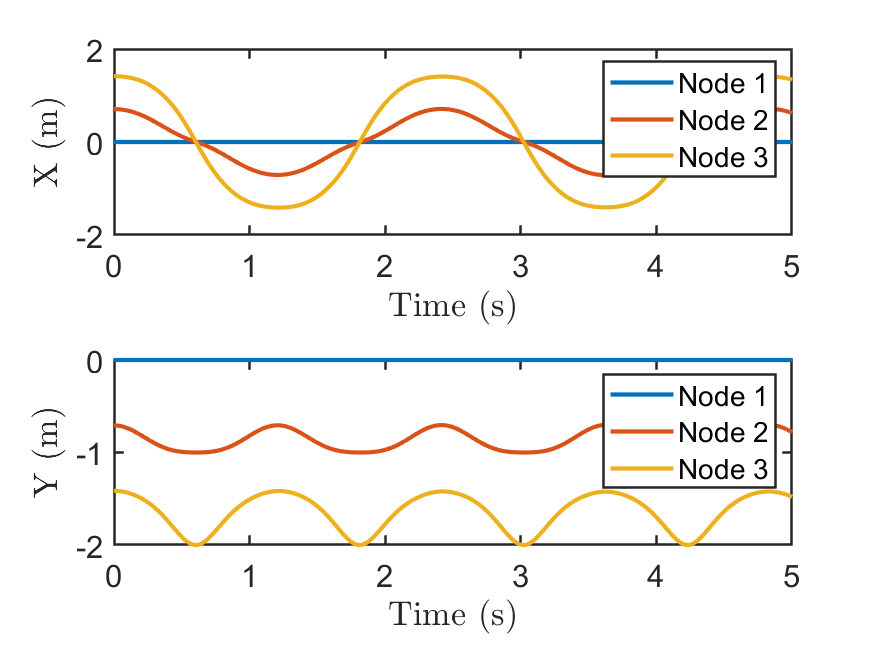}
    \caption{$X$- and $Y$-coordinate time histories of node 1, 2, and 3 of the double pendulum.}
    \label{Nodal position of the double pendulum}
\end{figure}

\begin{figure}
    \centering
    \includegraphics[scale=0.55]{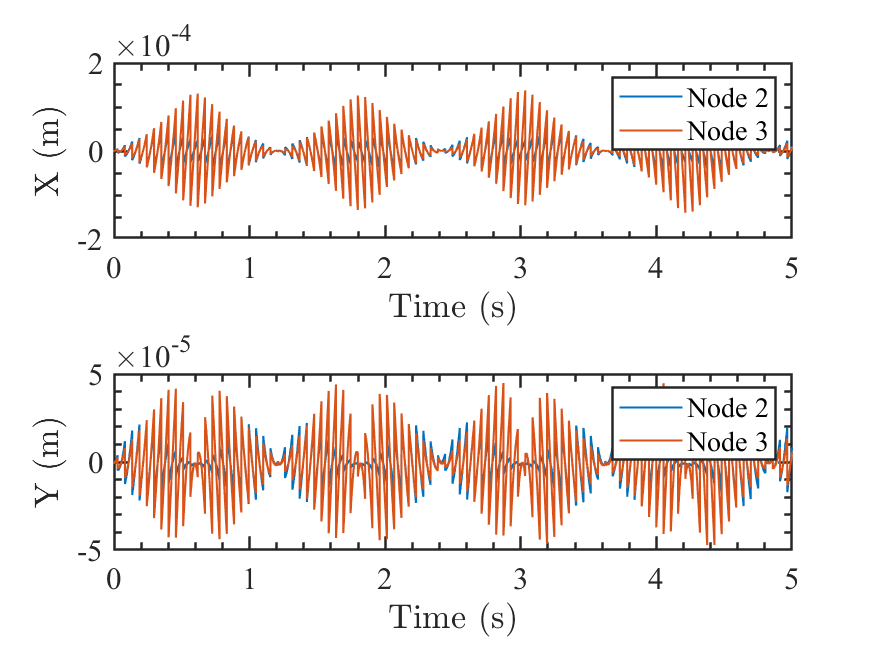}
    \caption{$X$- and $Y$-coordinate error time histories of node 2 and 3 between the TsgFEM dynamics and rigid body dynamics .}
    \label{Error of Nodal position compared with RBD}
\end{figure}

\begin{figure}
    \centering
    \includegraphics[scale=0.55]{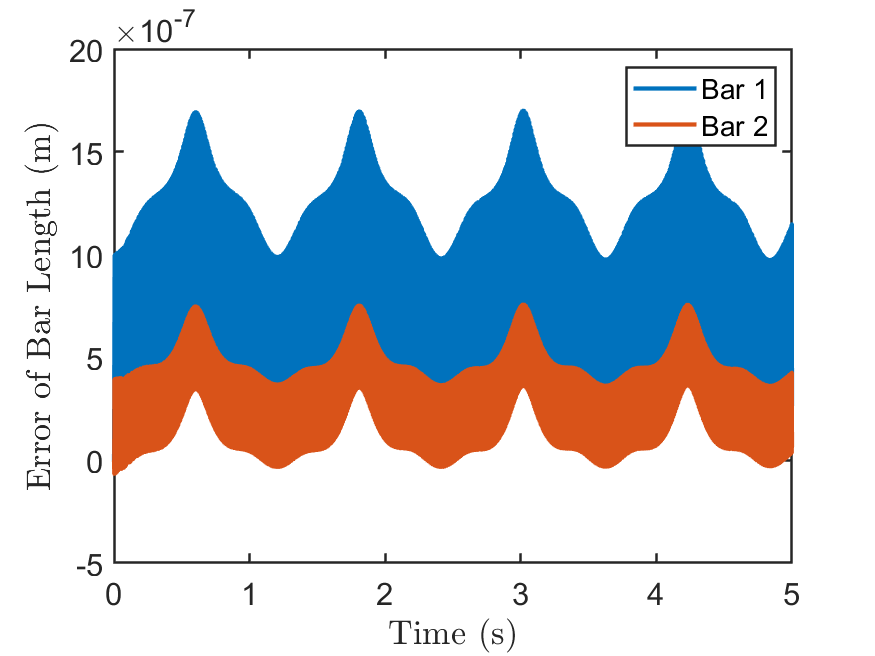}
    \caption{Bar length errors of the two bars in the double pendulum by TsgFEM simulation.}
    \label{Error of bar length compared with RBD}
\end{figure}

\subsection{Example 2: Cantilever truss in external force}

\begin{figure}
    \centering
    \includegraphics[scale=0.30]{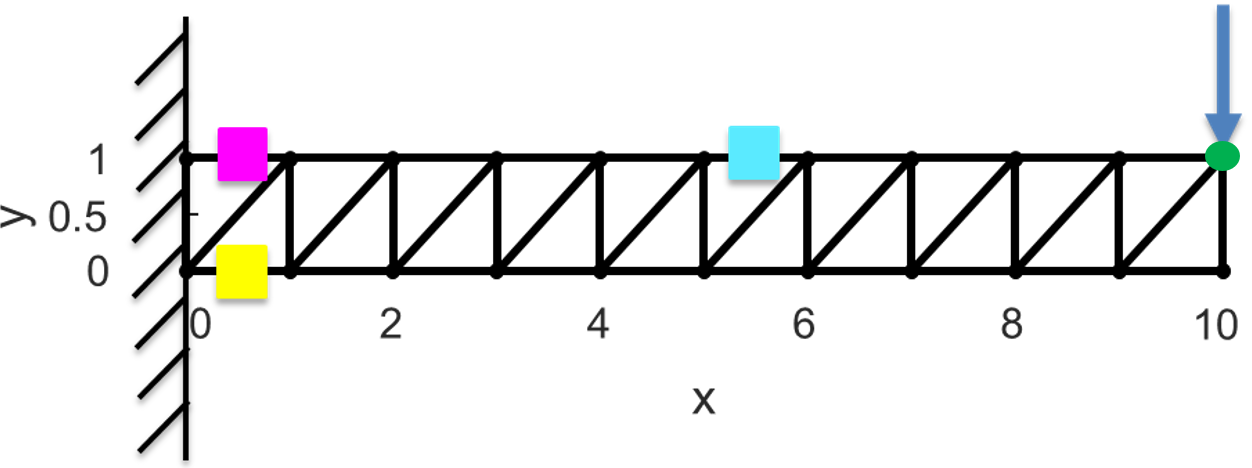}
    \caption{Configuration of a planer truss in the presence of a step load P with the left two nodes fixed to a wall in the given direction. The aspect ratio (length over width) of the truss is 10:1. We examine the strain-stress time histories of the purple, yellow, and blue elements marked by square blocks as well the Y-coordinate time history of the green dot.}
    \label{Configuration of a planer truss}
\end{figure}


This example is selected to verify the proposed dynamics method is capable of doing modal analysis as well as conducting time history analysis of structures with linear-elastic, multilinear elastic, and plastic materials. The natural frequency, mode shapes, and time history information of the structure obtained by TsgFEM will be compared with commercial FEM software ANSYS.

Fig.\ref{Configuration of a planer truss} shows a 10 m $\times$ 1 m planer truss, the left two nodes at the wall are fixed. A step load $P=1\times10^5$ N in the downward direction of the Y-axis is exerted on the green dot at time $t=0$ s. Cross-section area of each element is $A$ = 0.0025 m$^2$. Young's modulus of linear elastic material is $E=2.0604\times10^{11}$ Pa. The  multilinear elastic material is defined by two points ($1.456\times10^{-6}$, 300 MPa), ($2.333\times10^{-6}$, 435 MPa) in the piece-wise stress-strain curve. The elastoplastic material uses a bi-linear kinematic hardening plasticity model, in which Young's modulus is $E=2.0604\times10^{11}$ Pa, the yield stress is 300 MPa, and the tangent modulus in plastic is $E_t=6.1799\times10^{9}$ Pa. Damping and gravitational force is not considered in this example. Time step is $\Delta t=10^{-4}$ s, and total analysis time is $t=1$ s.

\begin{figure}
    \centering
    \includegraphics[scale=0.6]{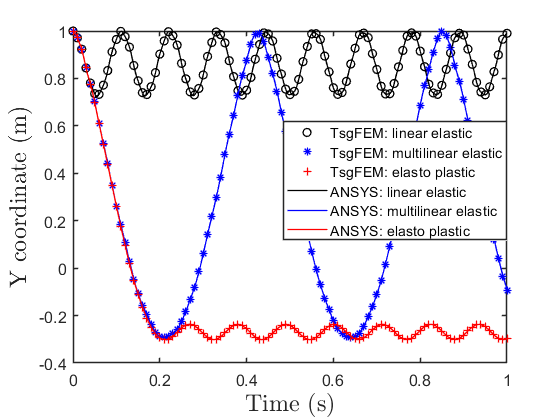}
    \caption{The Y-coordinate time history of the green dot in Fig.\ref{Configuration of a planer truss} by using three kinds of materials: linear elastic, multilinear elastic, and elasto plastic. And a comparison of nodal coordinate time histories between TsgFEM and ANSYS.}
    \label{Comparison of FEM with ANSYS truss}
\end{figure}

\begin{figure}
    \centering
    \includegraphics[scale=0.25]{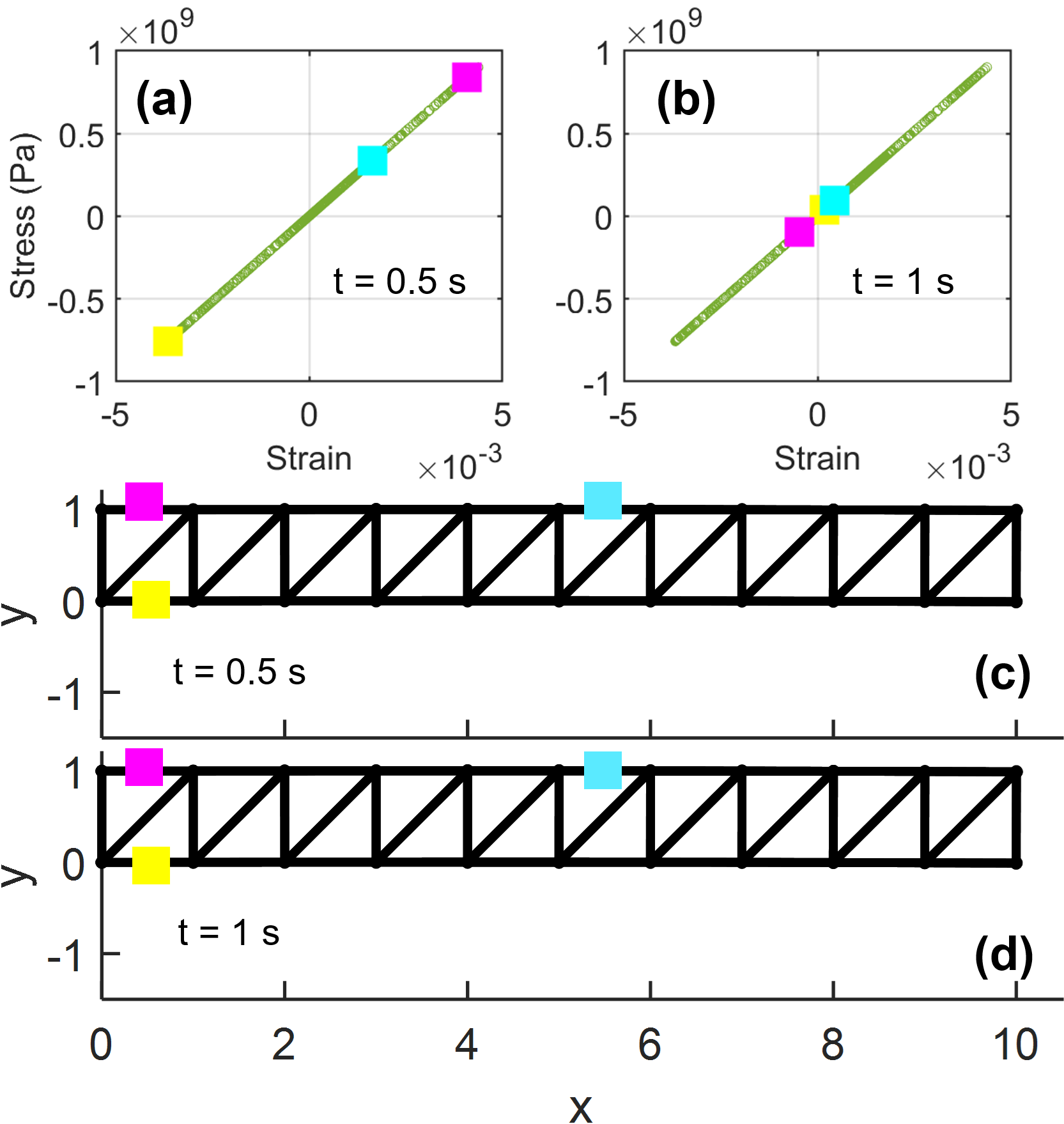}
    \caption{For linear elastic material of structure members, (a) and (b) are stress-strain of the purple, yellow, and blue blocks at t =  0.5 s and t = 1 s. (c) and (d) are corresponding structure deformation at t =  0.5 s and t = 1 s.}
    \label{ss_linear_elastic}
\end{figure}

\begin{figure}
    \centering
    \includegraphics[scale=0.25]{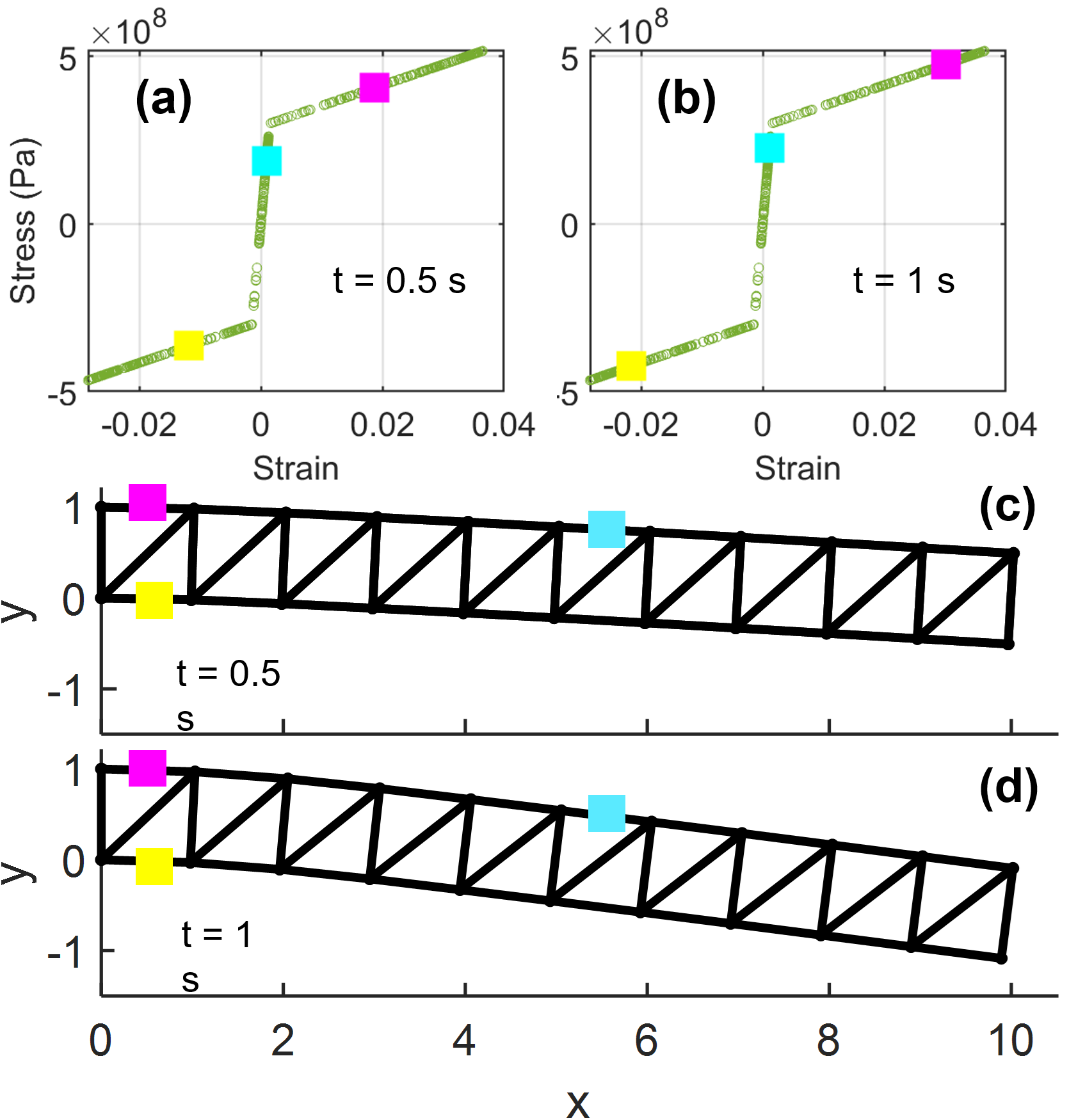}
    \caption{For multilinear elastic material of structure members, (a) and (b) are stress-strain of the purple, yellow, and blue blocks at t =  0.5 s and t = 1 s. (c) and (d) are corresponding structure deformation at t =  0.5 s and t = 1 s.}
    \label{ss_multilinear_elastic}
\end{figure}

\begin{figure}
    \centering
    \includegraphics[scale=0.25]{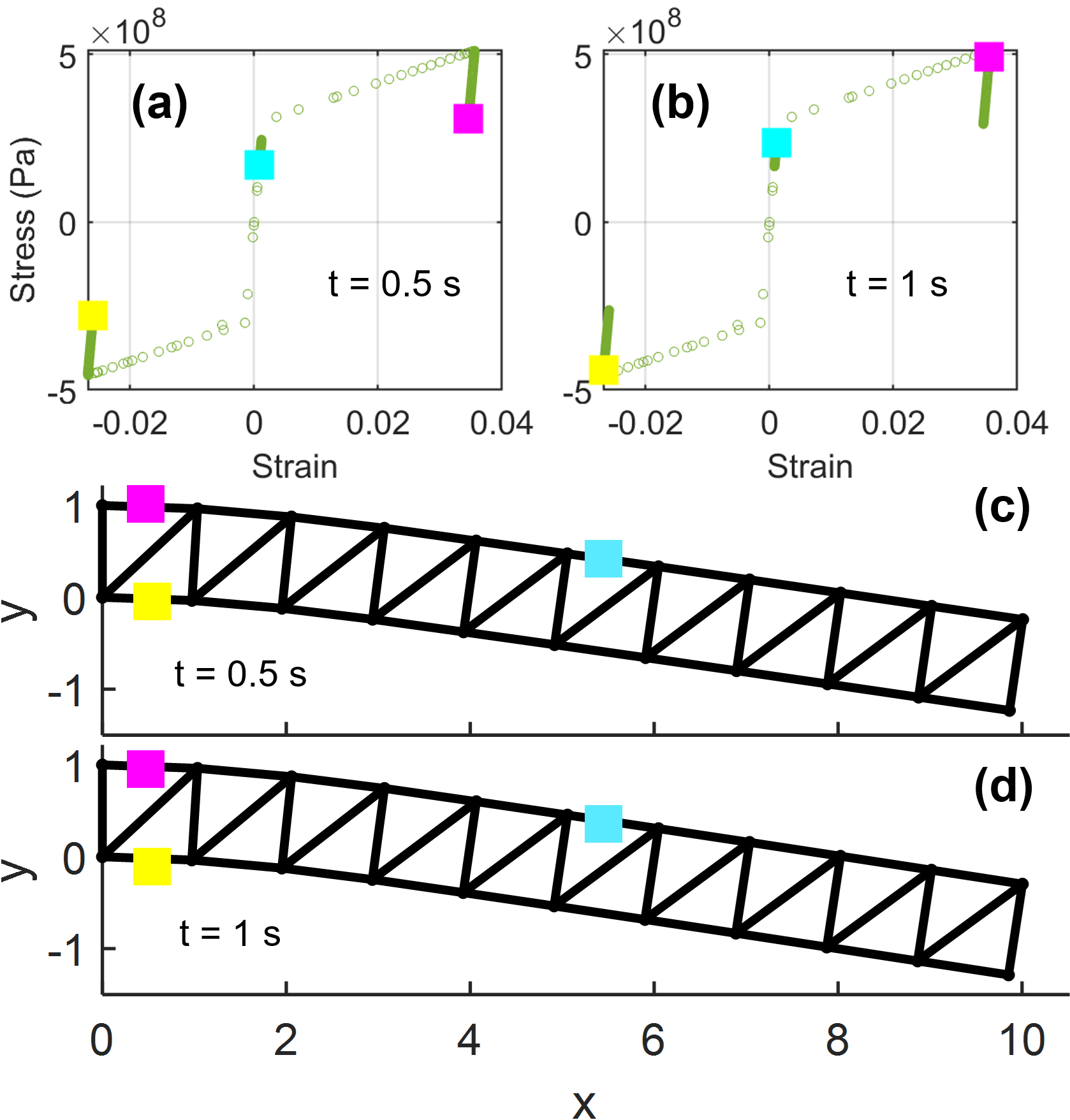}
    \caption{For plastic material of structure members, (a) and (b) are stress-strain of the purple, yellow, and blue blocks at t =  0.5 s and t = 1 s. (c) and (d) are corresponding structure deformation at t =  0.5 s and t = 1 s.}
    \label{ss_plastic}
\end{figure}

The dynamic response of the Y-coordinate of the green node for different materials and comparison with ANSYS in Fig.\ref{Configuration of a planer truss} are compared and given in Fig.\ref{Comparison of FEM with ANSYS truss}. In ANSYS, the transient analysis is used with a consistent mass matrix. The average errors of y-coordinate of node H between TsgFeM and ANSYS with linear elastic, multilinear elastic, and elastoplastic material are $0.09\%$, $0.53\%$, and $0.06\%$, respectively. For the three kinds of materials, the stress-strain of the purple, yellow,and blue blocks and corresponding structure deformation in Fig.\ref{Configuration of a planer truss} at at t = 0.5 s and t = 1 s are given in Figs.\ref{ss_linear_elastic}-\ref{ss_plastic}.\par

\begin{figure}
    \centering
    \includegraphics[scale=0.58]{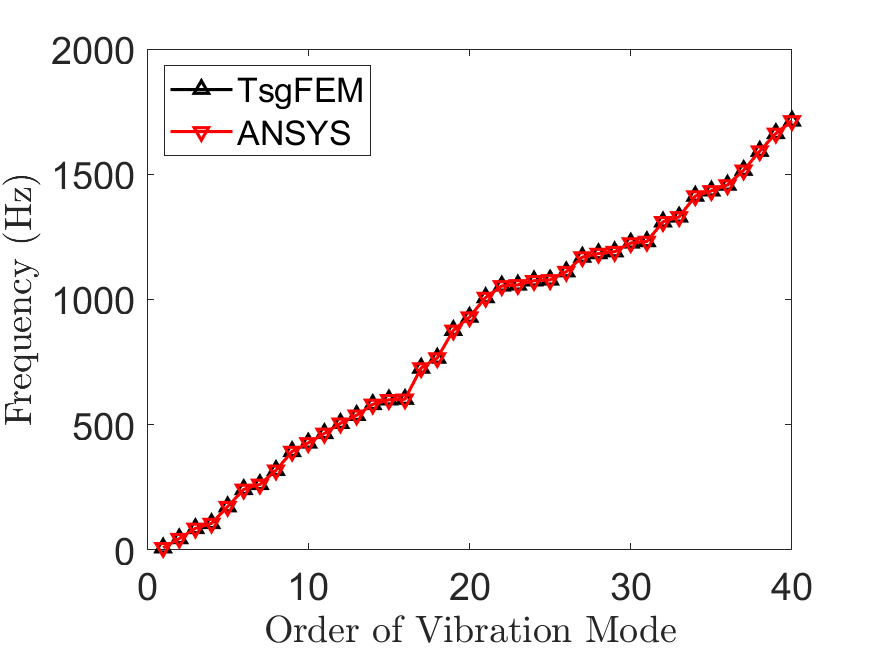}
    \caption{Natural frequencies of the planer truss with respect to the order of vibration mode by TsgFEM and ANSYS. Since the left two nodes of the planar truss is fixed, there are 20 free nodes (40 DOF) in the structure. The number of order of vibration modes is 40.}
    \label{frequency of truss}
\end{figure}

\begin{figure}
    \centering
    \includegraphics[scale=0.5]{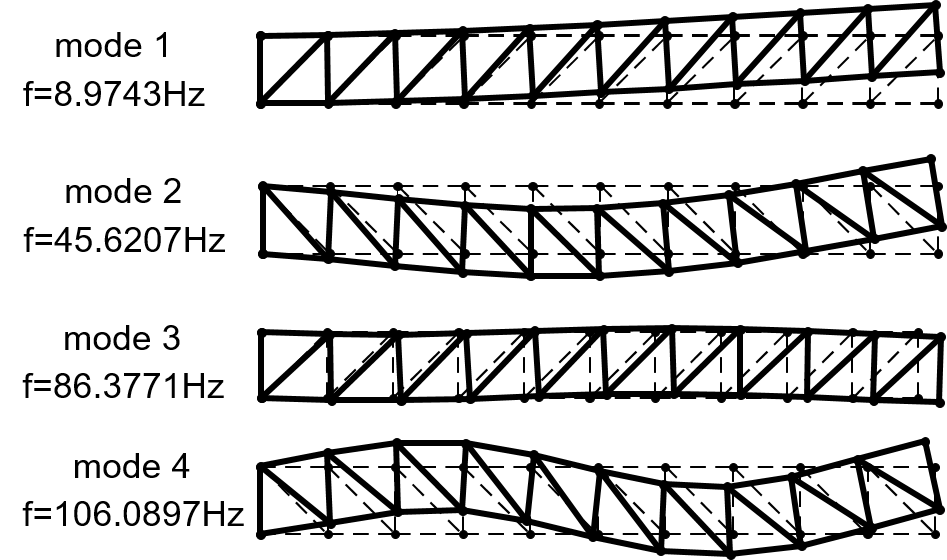}
    \caption{The first four mode shapes of the truss obtained by TsgFEM. The solid line sub-figures from top to bottom are: mode 1, f = 8.9734 Hz; mode 2, f = 45.6159 Hz; mode 3, f = 86.3678 Hz; and mode 4, f = 106.0784 Hz. The dotted lines under the solid lines are shapes of the original truss structure.}
    \label{mode shape of the planer truss TsgFEM}
\end{figure} 

\begin{figure}
    \includegraphics[scale=0.335]{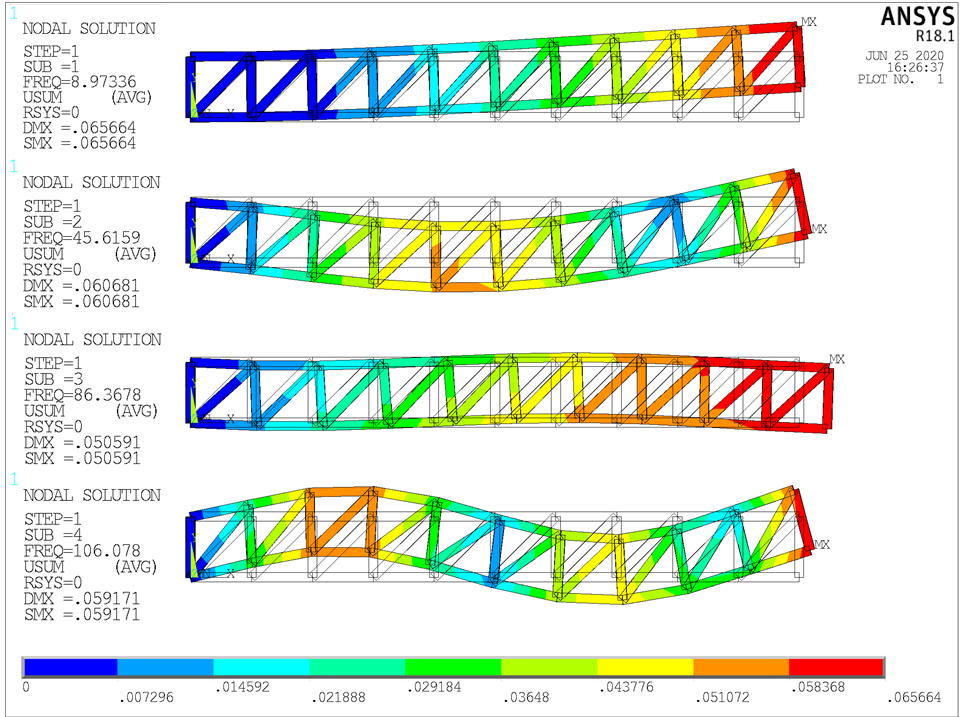}
    \caption{The first four mode shapes of the truss obtained by ANSYS. The colorful sub-figures from left to right are: mode 1, f = 8.9734 Hz; mode 2, f = 45.6159 Hz; mode 3, f = 86.3678 Hz; and mode 4, f = 106.0784 Hz. The light grey lines attached with each colorful mode shapes are the original truss structures. One can also see the natural frequencies obtained from ANSYS on the left side of this plot. }
    \label{Mode shape of truss by ansys}
\end{figure}

The natural frequency and first four modes of the cantilever truss calculated by TsgFEM are compared with ANSYS, as shown in  Figs.\ref{frequency of truss}-\ref{Mode shape of truss by ansys}. The comparative error of frequency by the two methods is $3.6640\times 10^{-13}$.



\subsection{Example 3: Seismic simulation of a tensegrity tower}


A double layer prism is picked as an example to verify the proposed dynamics approach has the ability to do modal analysis and seismic simulation of tensegrity structures with pinned nodes constraints. The double-layer prism is 30 m high, and the radius of the circumscribed circle is 10 m. The structure has two integral prestress modes by making a group of members in rotational symmetric positions \cite{yuan2003integral,zhang2017initial}. The prestress is determined by assigning the prestress of two groups of members. For example, the compression force of bars in two layers is $1.0 \times10^5$ N \cite{ma2019shape}. The cross-sectional area is designed by $10\%$ of yielding and buckling stress in minimal mass design \cite{skelton2009tensegrity}. 

Fig.\ref{Mode1_tower} and \ref{Mode1_tower_ansys} are the first four mode shapes obtained by TsgFEM and ANSYS. Fig.\ref{frequency_tower} is the comparison of the frequency of all 18 modes, and the comparative error of frequency by the two methods is about $4.9457\times 10^{-5}$.\par
The seismic simulation is solved by Eq. (\ref{dynamic equation with constraints in first order}) with sine seismic wave $x = 5sin(4\pi t)$ m exerted on ground motion in the X-direction. The time history of member force of a horizontal string in the first stage calculated by TsgFEM and commercial software ANSYS is shown in Fig.\ref{Time history a member's force}. The average error between TsgFEM and ANSYS in linear elastic members and linear elastic members considering slack of string is $2.91\%$, $2.75\%$, respectively.


\begin{figure}
    \centering
    \includegraphics[scale=0.5]{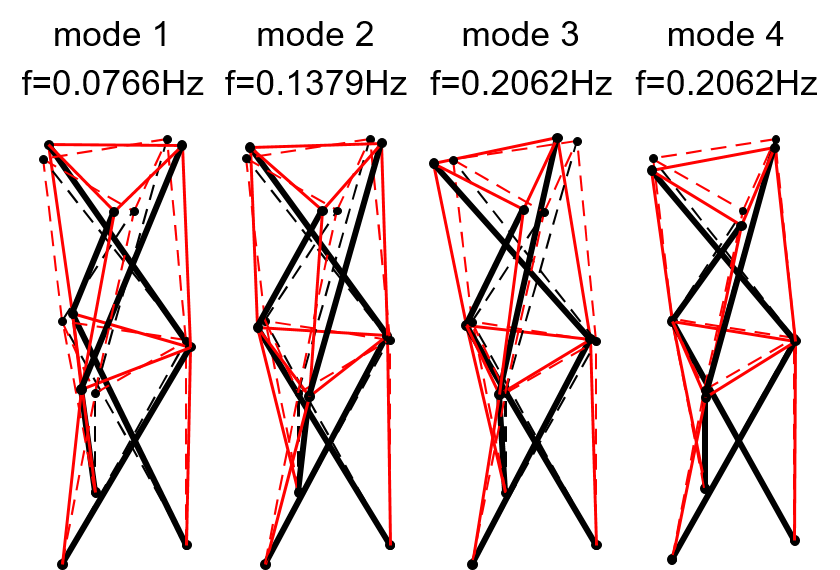}
    \caption{The first four mode shapes of the prism tower obtained by TsgFEM. The solid line sub-figures from top to bottom are: mode 1, f = 0.0766 Hz; mode 2, f = 0.1379 Hz; mode 3, f = 0.2062 Hz; and mode 4, f = 0.2062 Hz. The dotted lines under the solid lines are shapes of the original prism tower structure.}
    \label{Mode1_tower}
\end{figure}

\begin{figure}
    \centering
    \includegraphics[scale=0.35]{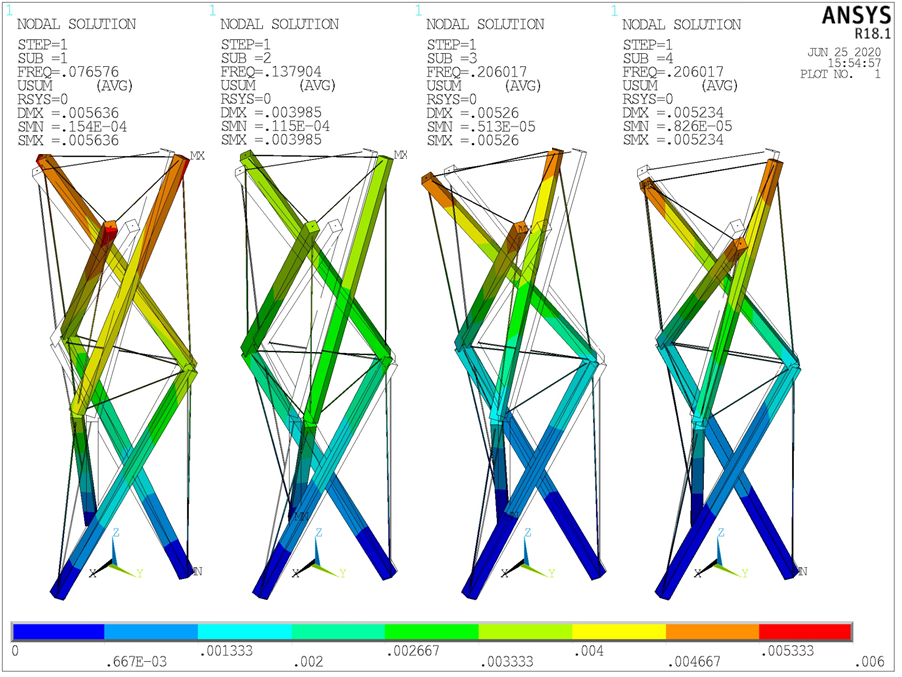}
    \caption{The first four mode shapes of the prism tower obtained by ANSYS. The colorful sub-figures from left to right are: mode 1, f = 0.0766 Hz; mode 2, f = 0.1379 Hz; mode 3, f = 0.2062 Hz; and mode 4, f = 0.2062 Hz. The grey lines under the colorful lines are shapes of the original prism tower structure.}
    \label{Mode1_tower_ansys}
\end{figure}

\begin{figure}
    \centering
    \includegraphics[scale=0.60]{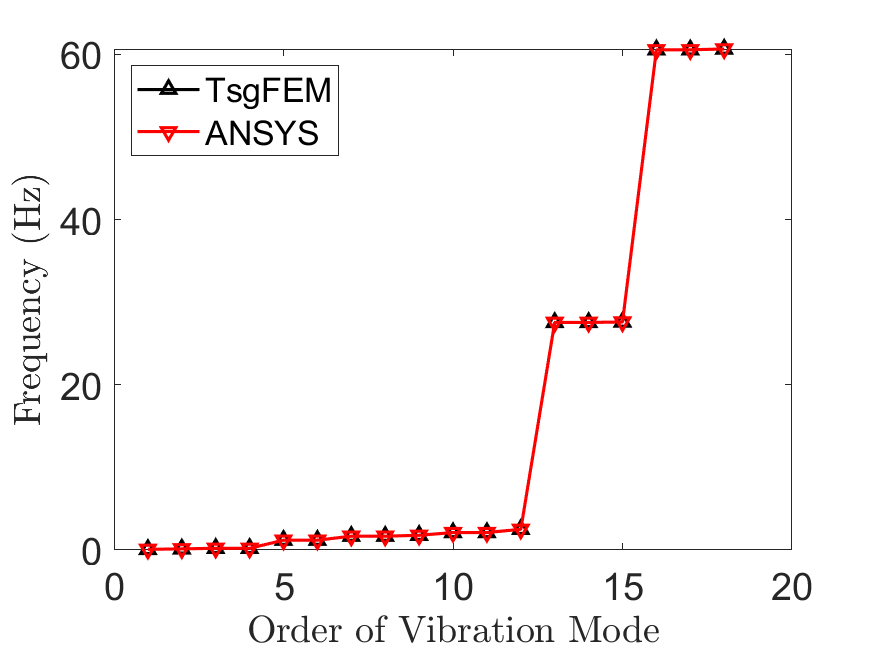}
    \caption{Natural frequencies of the prism tower with respect to the order of vibration mode by TsgFEM and ANSYS. Since the bottom three nodes of the 3D prism tower is fixed, there are 6 free nodes (18 DOF) in the structure. The number of order of vibration modes is 18.}
    \label{frequency_tower}
\end{figure}

\begin{figure}
    \centering
    \includegraphics[scale=0.60]{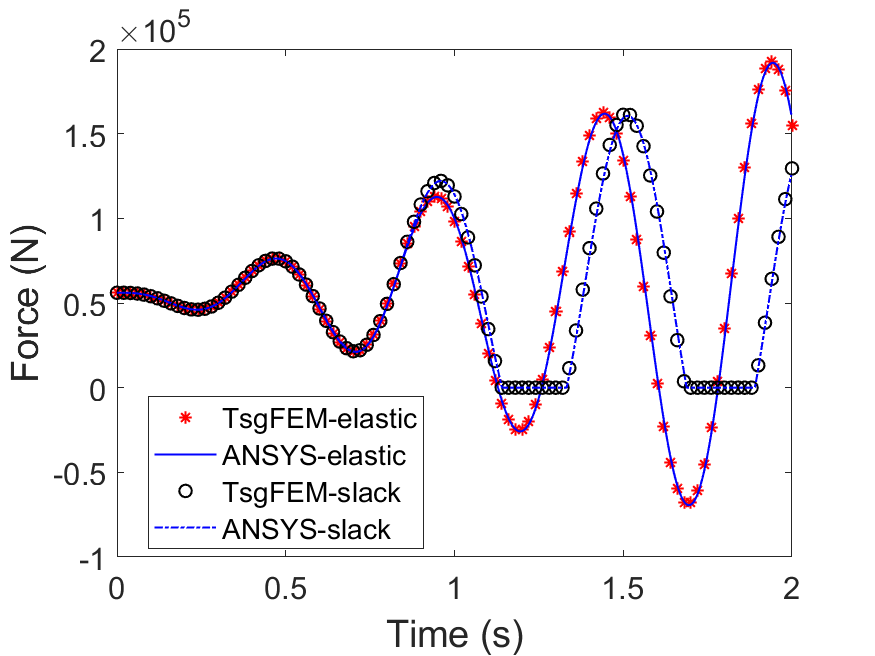}
    \caption{Time histories comparison of forces in the vertical string between TsgFEM and ANSYS with two kinds of structure materials: elastic and slack.}
    \label{Time history a member's force}
\end{figure}

\section{Conclusion}
A finite element analysis approach to non-linear tensegrity dynamics based on the Lagrangian method with a nodal coordinate vector as the variable is given in this paper. This approach allows one to conduct comprehensive studies on any tensegrity systems with any node constraints and various load conditions (i.e., gravitational force, some specified forces, and arbitrary seismic vibrations). Results show that this method is very accurate compared with analytical solutions of rigid body dynamics and FEM software ANSYS. For example, in the double pendulum simulation, the bar length error is $10^{-7}\sim10^{-6}$ m. In the truss example, comparative frequency errors are $3.6640 \times 10^{-13}$, the average comparative error of node coordinates in the linear elastic, multilinear elastic, and elastoplastic material simulation is 0.12\%, 0.50\%, and 0.02\% compared with ANSYS. In the seismic simulation, the comparative frequency error is $4.9457 \times 10^{-5}$, the average comparative error of node coordinates in the linear elastic and multilinear elastic material simulation is 2.91\%, 2.75\% compared with ANSYS. The accurate linearized model in the state space form can be an interface to integrate other disciplines, such as control and signal processing. This study paves a way to perform accurate tensegrity simulations as well as comprehensive understandings of the performance of both structures and materials. 


\label{section 7}


\section{Appendix}
\subsection{Dynamics of the double pendulum}\label{app_pendulum}

Let the mass of the two bars in the double pendulum be $m_1 = m_2 = m$ with a length of $l_1=l_2=l$, from geometric properties shown in Fig.\ref{db_config}, one can write,
\begin{align}
\mathrm{x}_{1}& = \frac{l}{2} \sin {\theta_{1}},~ y_{1}=-\frac{l}{2} \cos{\theta_{1}}, \\
x_{2} & =l(\sin{\theta_{1}}+\frac{1}{2} \sin{\theta_{2}}), \\
y_{2} & =-l(\cos {\theta_{1}}+\frac{1}{2} \cos{\theta_{2}}).
\end{align}
Define $\mathrm{L} = \mathrm{T} - \mathrm{V}$, where $\mathrm{T}$ and $\mathrm{V}$ are kinetic energy and potential energy of the system, then:
\begin{align} \nonumber
\mathrm{L}= \frac{m}{2}\left(\dot{x}_{1}^{2}+\dot{y}_{1}^{2}+\dot{x}_{2}^{2}+\dot{y}_{2}^{2}\right) & + \frac{1}{2} I\left(\dot{\theta}_{1}^{2}+\dot{\theta}_{2}^{2}\right) \\ & -m g\left(y_{1}+y_{2}\right),
\end{align}
where $I = \frac{1}{12}ml^2$ is moment of inertia about the center of mass of the bar. Using Lagrangian method, and $m = 1$ kg and $l= 1$ m, we get:
\begin{align}\nonumber
 8 \ddot{\theta}_{1}+3 \ddot{\theta}_{2} \cos \left(\theta_{1}-\theta_{2}\right)  + & 3 \dot{\theta}_{2}^{2} \sin \left(\theta_{1}-\theta_{2}\right) \\ & +9 \frac{g}{l} \sin \left(\theta_{1}\right)=0, \\ \nonumber
 2 \ddot{\theta}_{2}+3 \ddot{\theta}_{1} \cos \left(\theta_{1}-\theta_{2}\right) - & 3 \dot{\theta}_{1}^{2} \sin \left(\theta_{1}-\theta_{2}\right) \\& +3 \frac{g}{l} \sin \left(\theta_{2}\right)=0.
\end{align}

\printcredits

\bibliographystyle{cas-model2-names}

\bibliography{cas-refs}

\end{document}